\documentclass[a4paper,UKenglish,cleveref, autoref, thm-restate]{lipics-v2021}

\hideLIPIcs  

\usepackage{mathtools}

\newtheorem*{problem*}{Problem}
\newtheorem*{openproblem*}{Open Problem}
\newtheorem*{conjecture*}{Conjecture}
\theoremstyle{claimstyle}

\newtheorem*{question*}{Question}

\usepackage{algorithm, algorithmic}

\newcommand{\LINECOMMENT}[1]{\item[] $\blacktriangleright$ \textit{#1}}

\title{RevCRN: Reversible Analog Computation using Chemical Reaction Networks} 

\author{Saptarshi Biswas}{Department of Computer Science, Iowa State University, Ames, IA, USA \and Ames National Laboratory, Ames, IA, USA}{sbiswas@iastate.edu}{https://orcid.org/0000-0001-7082-023X}{}

\author{James I. Lathrop}{Department of Computer Science, Iowa State University, Ames, IA, USA}{jil@iastate.edu}{}{}

\author{Rana D. Parshad}{Department of Mathematics, Iowa State University, Ames, IA, USA}{rparshad@iastate.edu}{}{}

\authorrunning{S. Biswas, J. I. Lathrop, and R. D. Parshad} 

\Copyright{Saptarshi Biswas, James I. Lathrop, and Rana D. Parshad} 

\ccsdesc[500]{Computer systems organization~Molecular computing}
\ccsdesc[500]{Theory of computation~Models of computation}
\ccsdesc[500]{Theory of computation~Computability}
\ccsdesc[500]{Theory of computation~Abstract machines}
\ccsdesc[300]{Theory of computation~Turing machines}
\ccsdesc[500]{Mathematics of computing~Ordinary differential equations}
\ccsdesc[500]{Mathematics of computing~Continuous functions}
\ccsdesc[300]{Mathematics of computing~Differential calculus}
\ccsdesc[500]{Mathematics of computing~Computations in finite fields} 

\keywords{Reversible Chemical Reaction Network, Analog Computation, Reversible Computing, Computability, Real Numbers} 

\acknowledgements{We would like to thank Dr. Jack Lutz and Dr. Titus Klinge for sharing their valuable feedback and insights on this work.}

\nolinenumbers 

\begin{document}

\maketitle

\begin{abstract}

    The computability of real numbers and functions using Turing Machines has been a central area of theoretical computer science since the mid-20th century. In the late 20th century, it was shown that chemical reactions can serve as a basis for computation using the Chemical Reaction Network (CRN) model. Recent advances in computing real numbers using Deterministic Chemical Reaction Networks (DCRNs) have identified numerous classes of DCRN-computable real numbers. In parallel, the works of R. Landauer and C. H. Bennett, spanning the 1960s to the early 2000s, showed that reversible computing offers significant advantages over irreversible methods, particularly in energy efficiency, motivating extensive research on reversible computation.

    In this work, we investigate the computability of real numbers using Reversible Chemical Reaction Networks (RevCRNs). The paper has two primary contributions: (1) establishing relationships among CRN-computable real number classes including Lyapunov CRN ($\mathbb{R}_{LCRN}$), Real-Time CRN ($\mathbb{R}_{RTCRN}$), rational numbers ($\mathbb{Q}$), and RevCRNs ($\mathbb{R}_{RevCRN}$), with key results: (i) $\mathbb{Q}$ is a strict subset of $\mathbb{R}_{RevCRN}$; (ii) the set of positive algebraic numbers ($ALG$), $\mathbb{R}_{LCRN}$, and real numbers computable by 1-species RevCRN ($\mathbb{R}_{RevCRN}^{1s}$) are equal; (iii) $\mathbb{R}_{RTCRN}$ and $\mathbb{R}_{RevCRN}$ exhibit non-empty overlap; and (iv) the set of real numbers computable by detailed-balanced RevCRNs ($\mathbb{R}^{DetBal}_{RevCRN}$) is a subset of $ALG$; and (2) exploring the existence of a hierarchy within $\mathbb{R}_{RevCRN}$. Finally, we leave open the exact relationship between $\mathbb{R}_{RevCRN}$ and $\mathbb{R}_{RTCRN}$ while conjecturing a general hierarchy of RevCRN-computable reals.
\end{abstract}

\section{Introduction}

Scientific research witnessed one of the historical shifts in the early 20th century, when Turing independently introduced the fundamental computational model, the Turing Machine, alongside Church's $\lambda$-calculus work \cite{Church_1936}, in his famous 1936 paper \cite{Turing1936}. This sparked a new branch of study encompassing the theory of computing real numbers and functions \cite{kawamura2009, KO1982, Pour-El_Richards_2017}.

On the other hand, the work of Shapiro and Shapley \cite{Shapiro_MassActionGibbsFunction_DCRNT_EquilibriumComputation}, in 1965, sheds light on the profound connection between the mass-action kinetics and the computability of the equilibrium concentrations of complex chemical reactions. In the same year, work by Aris introduced a field of study that we now call Chemical Reaction Network Theory \cite{Aris1965_DCRNT}. Since then, subsequent important works have established the field \cite{Feinberg1972_complex_balalnce_DCRNT, HornJackson1972_complex_balancing, Krambeck1970_DCRNT}. This has led to Chemical Reaction Networks (CRNs) emerging as a significant, unconventional computational paradigm and a key focus in molecular programming.

CRNs are based on designing chemical reactions to perform various tasks, including logical and arithmetic operations, as well as computing real numbers and functions \cite{Angluin_FastComputation_Probabilistic_PP, function_computability, Chen2014_CRN_function_computation, computability_dcrn, Almeida_SCRN_Turing_completeness}. Two of the most fundamental and widely used variant models of CRNs are Stochastic Chemical Reaction Networks (SCRNs) \cite{Cook2009_SCRN, Almeida_SCRN_Turing_completeness, SoloveichikCookWinfreeBruck_SCRN}, which are modeled using continuous-time Markov Chains, and Deterministic Chemical Reaction Networks (DCRNs), whose dynamics are governed by Ordinary Differential Equations (ODEs) \cite{Bournez2021_survey_analog, EpsteinPojman_CRN_book, computability_dcrn}. In this work, we will focus specifically on DCRNs.

CRNs serve as a special case \cite{Bournez2021_survey_analog, GPAC_RTCRN_equivalence, Huang2019} of Shannon's General Purpose Analog Computer \cite{Shannon_GPAC}. Fages, Guludec, Bournez, and Pouly, in their paper \cite{FagesGuludecBournezPouly_Turing_completeness}, have proven that DCRNs are Turing complete. In subsequent years, it has been shown that the set of real numbers computable by DCRNs includes both algebraic and transcendental numbers \cite{Fletcher2025_ALG_LCRN, huang_phd_thesis, GPAC_RTCRN_equivalence, Huang2019}. Huang et. al. in \cite{Huang2019} also introduce constrained models of DCRNs: the Real-Time Chemical Reaction Network (RTCRN) and the Lyapunov Chemical Reaction Network (LCRN), which satisfy conditions such as boundedness, exponential stability, and real-time computability. RTCRN and LCRN are formally defined in definitions \ref{def:rtcrn} and \ref{def:lcrn} in section \ref{sec:dcrn} of this paper. Further studies \cite{Fletcher2025_ALG_LCRN, Huang2019} highlight the relation between the sets of real numbers computed by different variants of DCRN. In this paper, we will consider $\mathbb{Q}$ as the set of all rational numbers, $ALG$ the set of all algebraic numbers, and $\mathbb{R}_{CRN}, \mathbb{R}_{LCRN}, \mathbb{R}_{RTCRN}$ the set of real numbers computable by deterministic CRN, LCRN, and RTCRN, respectively. Some key results include:
\begin{enumerate}
    \item $\mathbb{Q}\subsetneqq ALG\subsetneqq\mathbb{R}_{CRN}$ \cite{Huang2019}. 
    \item $\mathbb{R}_{LCRN}\subseteq\mathbb{R}_{RTCRN}$ \cite{Huang2019}. 
    \item $ALG=\mathbb{R}_{LCRN}$ \cite{Fletcher2025_ALG_LCRN}.
    \item Bournez, Fraigniaud, and Koegler showed that Large Population Protocols (LPP) \cite{bournez:hal-00760928, BOURNEZ20091340} can compute any algebraic number \cite{lpp_alg}. Later, Huang and Huls showed that any numbers computable by GPAC and CRN are computable by LPP \cite{huang_et_al:LIPIcs.DNA.28.7}.
\end{enumerate}

There have been studies on the stability of chemical reaction networks \cite{feinberg2019foundations}. Notably, there are significant theorems, such as the Deficiency Theorems \cite{feinberg2019foundations, Feinberg1972_complex_balalnce_DCRNT, FEINBERG198059_deficiency_one, FEINBERG19872229_deficiency_one, Feinberg1995_deficiency_one, Horn1972_deficiency_zero, HornJackson1972_complex_balancing}, which provide strong insights into the stability and convergence properties of reaction networks under specific conditions. Additionally, significant works introduce the concepts of detailed and complex balancing for analyzing mass flow equilibrium. \cite{Bridgman_detailed_balancing, Dirac_detailed_balancing, feinberg2019foundations, Fowler1925-pd_detailed_balancing, HornJackson1972_complex_balancing, Wegscheider1901_detailed_balancing}. 

However, in general, it is not guaranteed whether any arbitrary fully reversible or irreversible chemical reaction network converges to a stable equilibrium or if stable limit cycles exist to which the system may converge \cite{BOROS2023103839_limit_cycle, Erban2023_limit_cycle, SCHNAKENBERG1979389_limit_cycle}. As a consequence, since this work focuses on computing real numbers, we will concentrate exclusively on the family of reversible chemical reaction networks that have at least one stable fixed point on the positive real line.

C. H. Bennett's earlier works, spanning from 1973 to the early 2000s, explore the significance of reversible systems in the physical world and their implications for computing \cite{bennett_rev1,bennett_rev2,bennett_rev3}. From these works, as well as a standard textbook on chemical thermodynamics \cite{thermodynamics_book}, some key findings about reversible systems include:
\begin{enumerate}
    \item The net change in the system's entropy is low.
    \item The energy efficiency of such systems is high.
    \item The initial state of the system can be derived from any arbitrary state of the system by going backward through the process flow of the system.
\end{enumerate}

Recent works have also highlighted the significance of reversible systems across various known models of computation \cite{reversibility_reaction_system, BARYLSKA201848_reversible_petrinet, reversible_fsa, reversible_computing_models, reversibility_report}. Recent work by Kini and Doty \cite{kini2026reverserobustcomputationchemicalreaction} establishes that the reverse-robust CRN computation model can successfully compute any semilinear predicate or function. These works motivate us to extend reversible computing to the framework of CRNs computing real numbers. Besides energy efficiency, reversibility further enhances the cost-effectiveness of CRN systems by optimizing resource use, enabling species reuse, and facilitating economical extraction of species \cite{Hannah_reversibility_optimization}. In contrast, irreversible systems have difficulties recovering and extracting species after use. They also require reinitialization, a continuous supply of species for repeated computations, and exhibit greater heat generation and loss \cite{Landauer_irreversible_heat_generation}, which reversible systems can manage more efficiently. However, in this work we are not restricted to closed, fully reversible reaction networks that admit a thermodynamic equilibrium and thus exhibit energy efficiency. Instead, we also investigate open systems, where our interest lies not necessarily in energy efficiency but in the reusability of CRNs.

In this regard, this work provides a formal definition of the Reversible Chemical Reaction Network (RevCRN) and examines its advantages and computational limitations in the context of computing real numbers. The RevCRN model discussed in this study is based on fully reversible reactions considered within open, semi-open, and closed systems. The findings highlight the computational potential of the RevCRN framework, which could be used to optimize various computational processes, as previously mentioned. Precisely, the findings of this work suggest a relationship among the sets $\mathbb{Q}$, $ALG$, $\mathbb{R}_{LCRN}$, $\mathbb{R}_{RTCRN}$, and $\mathbb{R}_{RevCRN}$, which is illustrated in Figure \ref{fig:inclusion}, highlighting the overall scope of the class of RevCRN-computable reals $\mathbb{R}_{RevCRN}$.
\begin{figure}[htbp]
    \centering
    \includegraphics[width=0.55\linewidth]{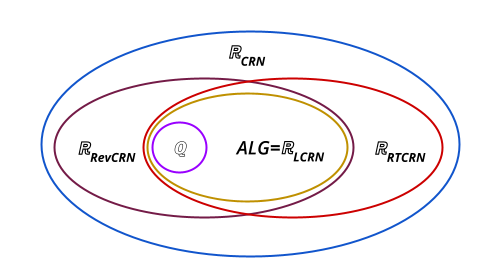}
    \caption{Inclusions of $\mathbb{R}_{RevCRN}$}
    \label{fig:inclusion}
\end{figure}

The key contributions of this research are summarized as follows:
\begin{enumerate}
    \item Establishing a broader idea about the set of all RevCRN-computable reals, denoted as $\mathbb{R}_{RevCRN}$, and its relation to other classes of CRN-computable reals, such as $\mathbb{Q}, ALG,$ $\mathbb{R}_{LCRN}$, and $\mathbb{R}_{RTCRN}$. 
    \item Investigate the existence of a hierarchy within $\mathbb{R}_{RevCRN}$.
\end{enumerate}

The remainder of this paper is organized as follows: Section 2 discusses the background theories and necessary definitions relevant to this research. Section 3 presents key results that contribute to the understanding of RevCRN computable reals and their overlap with the other classes of CRN-computable reals, namely, $\mathbb{Q}$, $ALG$, $\mathbb{R}_{LCRN}$, and $\mathbb{R}_{RTCRN}$. Section 4 focuses on the second main contribution of this work: the hierarchy of RevCRN-computable reals. Finally, Section 5 offers a brief discussion of the results and highlights the future research directions.

\section{Preliminaries}

Before we continue our discussion of the background theories, we will clarify some essential notations for better understanding. We define the sets $\mathbb{S} \subseteq \mathbb{N}$ for stoichiometric coefficients and $\mathbb{K} \subseteq \mathbb{Q}_{\geq 0}$ for rate constants. The notation $\mathbb{R}_{>0}$ represents the set of real numbers strictly greater than 0. The same interpretation holds for $\mathbb{Q}$ or $\mathbb{N}$ when substituting $\mathbb{R}$. A subscript indicating $\geq 0$ signifies that the set includes numbers greater than or equal to 0. Furthermore, for a polynomial $A$, $supp(A)$ denotes its support.

\subsection{Deterministic Chemical Reaction Networks (DCRN)}
\label{sec:dcrn}
A Deterministic Chemical Reaction Network $\mathcal{N}$ is generally defined using a tuple of 2 entities $\mathcal{N}=(S, R)$ where $S$ is the finite set of species taking part in the reaction system and $R$ is the finite set of reactions in the network. An example of such a DCRN is
\[A+X\xrightarrow{k_1} Y,\quad A\xrightarrow{k_2} Y\]
Here, $X$ and $A$ are the reactant species while $Y$ is the product species. $k_1$ and $k_2$ are the rate constants for the respective reactions. The product side and the reactant side are called \textit{complexes} individually. For e.g., the above reaction network has the following set of \textit{complexes} $\mathcal{C}=\{A+X, A, Y\}$. There can also be complexes that do not involve any species. Such a complex, called a $0-$complex, represents interaction with the external source or sink and is represented by $\phi$. The governing rate equations of the mass-action system are represented using the following ODEs
\[\frac{dx}{dt}=-k_1xa\qquad \frac{da}{dt}=-k_2a-k_1xa\qquad \frac{dy}{dt}=k_1xa+k_2a\]
Note that each of the $x,y,$ and $a$ appearing in the equation is actually a concentration trajectory function of time. 

The DCRN model represents a multivariate Initial Value Problem (IVP) in which each species is initialized with an initial concentration, and the system evolves, performing the desired computation. In general, if we have an $n-$species system $S={X_1, \dots, X_n}$, then the system can be represented as a Polynomial Initial Value Problem (PIVP) \cite{GRACA_PIVP}. In such a system, we represent the species concentration by a vector $x(t)=[x_1(t),\dots, x_n(t)]$, the initial concentrations by the vector $x(0)=[x_1(0),\dots, x_n(0)]$ and the rate equation dynamics using the polynomial vector $\frac{dx}{dt}=[p_1(x)\dots, p_n(x)]$ where each of the $p_i(x)$ represents a polynomial of the $n-$species trajectory functions $x_1(t),\dots,x_n(t)$. In this paper, we will denote the concentration vector as $x = [x_1, \dots, x_n]$ and the equilibrium concentration as $x^* = [x^*_1, \dots, x^*_n]$, considering concentrations at any arbitrary time in general.

Generally, for CRNs computing real numbers, it is considered that the stoichiometric coefficients are natural numbers, the initial values of each of the species and the rate constants of every reaction will be reals greater than or equals 0, i.e., $\forall x_i(0),k\in\mathbb{R}_{\geq0}$ \cite{Fletcher2025_ALG_LCRN, GPAC_RTCRN_equivalence, Huang2019, Chen2014_CRN_function_computation, feinberg2019foundations}. In this work, we aim to ensure fairness in the computation process by restricting initial concentrations and rate constants to positive rational numbers, i.e., $\forall x_i(0)\in\mathbb{Q}_{\geq 0}$ and $k\in\mathbb{Q}_{\geq 0}$. This approach is intended to prevent any unfair advantage that may arise from incorporating irrational numbers into rate constants or initial concentrations. Also note that all species concentrations at any given time are lower-bounded by 0. The paper \cite{Huang2019} introduces variants of DCRNs with additional constraints that we are interested in in this work. We restate the standard definitions without significant modification, following \cite{Fletcher2025_ALG_LCRN, GPAC_RTCRN_equivalence, Huang2019}. The norm appearing in the definitions is the standard $L_\infty$ norm \cite{Horn_Johnson_2012}.

\begin{definition}[Real-Time Chemical Reaction Network (RTCRN) computable reals \cite{Huang2019}] 
A real number $\alpha\in \mathbb{R}$ is said to be real-time computable by a CRN if there exists a CRN $\mathcal{N}=(S, R)$ and a designated output species $X_j\in S$ satisfying the following properties:
\begin{enumerate} 
    \item \textbf{Integrality:} The rate constant $k(\rho)$ of every reaction $\rho \in R$ is a positive integer ($k(\rho) \in \mathbb{Z}^+$). 
    \item \textbf{Boundedness:} There exists a constant $\beta\in \mathbb{R}_{>0}$ such that, when the system is initialized with the zero vector $x(0)=\mathbf{0}$, the concentration $x_i(t) \leq \beta$ for all species $X_i \in S$ and all $t\in[0,\infty)$. 
    \item \textbf{Real-Time Convergence:} If the system is initialized with the zero vector $x(0)=\mathbf{0}$, then for all $t\geq 1$, the trajectory function of the designated output species satisfies $|x_j(t)-|\alpha||\leq 2^{-t}$. 
\end{enumerate} 
\label{def:rtcrn} 
\end{definition}

To understand another variant of the CRN, the Lyapunov CRN, we first need to restate the definition of an exponentially stable point from \cite{Fletcher2025_ALG_LCRN, GPAC_RTCRN_equivalence, Huang2019}, as follows.

\begin{definition}[Exponentially Stable Point \cite{Huang2019}] 
    Let us consider a CRN with $n$ species. A fixed point $x^*\in\mathbb{R}^n_{\geq 0}$ is called an exponentially stable state if there exist constants $\delta, \alpha, C \in (0,\infty)$ such that, for any initial state $x(0)\in[0,\infty)^n$ satisfying $|x(0)-x^*| \leq \delta$, the system satisfies
    \begin{equation}
    |x(t)-x^*| \leq C e^{-\alpha t} |x(0)-x^*|
    \end{equation}
    for all $t\in[0,\infty)$.
    \label{def:exp_stable_pt} 
\end{definition}

We now restate the definition of Lyapunov CRN from \cite{Fletcher2025_ALG_LCRN, GPAC_RTCRN_equivalence, Huang2019}.

\begin{definition}[Lyapunov Chemical Reaction Network (LCRN) computable reals \cite{Huang2019}] 
    A real number $\alpha\in\mathbb{R}$ is said to be Lyapunov computable by a CRN if there exists a CRN $\mathcal{N}=(S, R)$, a designated output species $X_j\in S$, and a fixed point $x^*\in[0,\infty)^{|S|}$, where $|S|$ is the cardinality of set $S$, satisfying $x^*_j=|\alpha|$, such that the following properties hold:
    \begin{enumerate} 
        \item \textbf{Integrality:} The rate constants satisfy the integrality condition specified in Definition~\ref{def:rtcrn}.
        \item \textbf{Boundedness:} The species concentrations satisfy the boundedness condition specified in Definition~\ref{def:rtcrn}.
        \item \textbf{Exponential Stability:} The state $x^*$ is an exponentially stable state of the CRN $\mathcal{N}$ in the sense of Definition~\ref{def:exp_stable_pt}.
        \item \textbf{Convergence:} If the CRN is initialized with the zero vector $x(0)=\mathbf{0}$, then the trajectory of $\mathcal{N}$ converges to the stable state, meaning $\lim_{t\rightarrow\infty} x(t) = x^*$.
    \end{enumerate}
    \label{def:lcrn}
\end{definition}

\subsection{Open, semi-open and closed systems}

For simplicity, chemical reactions are commonly studied in systems where mass exchange with external sources or sinks is prohibited, thereby satisfying the law of conservation of mass within the reaction system. This kind of system is known as \textit{closed system} with respect to matter. However, by mathematically modeling chemical reaction systems, the mass flow can be generalized by accounting for mass exchange with external reservoirs. This is primarily done by considering pseudo-reactions of this form \cite{CONRADI2019279_CRN_book2,feinberg2019foundations}
\[\phi\rightleftharpoons A\]
Here, the $0-$complex $\phi$ is considered as the external reservoir and $A$ is a species of the reaction network. Note that even in these systems, the law of conservation of mass is conserved. This kind of network does not mean that mass is created or destroyed. Rather, it means that there is a mass flow between the system and the environment. The law of conservation of mass thus holds for both the system and the environment taken together. If the mass flow is considered in both directions, i.e., to and from the system, at the same time, then such a system is called \textit{open system} with respect to matter. Otherwise, it is called a \textit{semi-open system} if only one directional mass flow is considered \cite{CONRADI2019279_CRN_book2}.

\subsection{Irreversible, reversible, and weakly-reversible systems}

According to the standard definitions, the common types of reactions are defined as follows.

\begin{definition}[\cite{roberts2008chemical}]
    A reaction is considered reversible if a set of reactants can react to form a set of products, and those products can then react to regenerate the original reactants simultaneously. In contrast, a reaction is considered irreversible if there is no pathway to convert the products back into the original reactants.
\end{definition}

An example of such a reaction is
\begin{equation*}
    \begin{split}
        &Reversible:2H_2+O_2\rightleftharpoons 2H_2O\\
        &Irreversible:  HCl+NaOH\rightarrow NaCL+H_2O
    \end{split}
\end{equation*}

Feinberg, in the book \cite{feinberg2019foundations}, generalizes the standard definition of these systems in the following form.

\begin{definition}[\cite{feinberg2019foundations}]
    Let $y$ and $y'$ be any two complexes in a reaction network. Then the reaction between complexes $y$ and $y'$ is said to be reversible if both complexes react simultaneously to give each other. 
    
    Conversely, a reaction between the two complexes is considered weakly reversible if a reaction pathway exists between them, rather than a direct, simultaneous reaction.

    Any reaction with one-directional flow between the complexes is considered irreversible.
\end{definition}

It is important to note that all reversible reactions are inherently weakly reversible, but not all weakly reversible reactions are necessarily reversible. In this study, we will explore reversible systems defined within open, semi-open, and closed systems. In both open and closed systems, a backward reaction will always occur whenever a forward reaction takes place between complexes with positive rate constants. But in semi-open systems, we allow for one-directional exchange of matter at the boundary. This means that matter can enter or exit the system without requiring a strict reverse flow. However, any reaction occurring between two non-empty complexes must still have a corresponding reverse reaction with positive rate constants.

\subsection{Reversible CRN (RevCRN)}

In this work, we define the Reversible Chemical Reaction Network (RevCRN) as follows.

\begin{definition}[Reversible Chemical Reaction Network (RevCRN)]
\label{def:revcrn}
    A CRN $\mathcal{N}=(S, R)$ is considered to be a Reversible CRN if it obeys the following conditions.
    \begin{enumerate}
        \item \textbf{Rational Constraint:} All the rate constants and initial concentrations of species are positive rationals.
        \item \textbf{Natural Constraint:} All the stoichiometric coefficients are natural numbers greater than or equal to 0
        \item \textbf{Reversibility:}
        \begin{enumerate}
            \item \textbf{Open and Closed Systems:}\\
            Every forward reaction is accompanied by a reverse reaction (including the reaction with $0-$complex), each with a positive rate constant. In other words, if $k_f$ and $k_b$ represent the rate constants for the forward and backward reactions, then $k_f>0$ implies that $k_b>0$, and vice versa.
            \item \textbf{Semi-open Systems:}\\
            The system follows the same conditions as open and closed systems, except for unidirectional matter exchanges across its boundaries. When a nonempty complex in the reaction network interacts with a $0-$complex \(\phi\) that represents an external source or sink, the network can exhibit one-directional reaction flow without requiring a reverse flow. In other words, if \(k_f > 0\) in such cases, it is permissible for \(k_b = 0\), and vice versa.
        \end{enumerate}
    \end{enumerate}
\end{definition}

\begin{note}
    The rational constraint discussed here is considered without loss of generality, in contrast to definitions \ref{def:rtcrn} and \ref{def:lcrn}. Relaxing the integrality constraints to rational ones does not affect the computational power of the CRN. This approach ensures that we do not encode any irrational solutions into the rate constants or the initial conditions, preventing the CRN from having an unfair computational advantage.
\end{note}

In general, an $n-$species reversible system with set of species $S=\{X_1,\dots, X_n\}$ and set of reactions $R=\{R_1,\dots,R_k\}$ is given as
\begin{equation}
    i^{th}\ reaction\ R_i:\qquad a_{1i}X_1+\dots + a_{ni}X_n\xrightleftharpoons[k_{i2}]{k_{i1}} b_{1i}X_1+\dots + b_{ni}X_n
    \label{eq:revcrn_reaction_equation}
\end{equation}
Here, $a_{ji},b_{ji}$ for $j\in\mathbb{N}$ are stoichiometric coefficients of the reactant and product associated with the $j^{th}$ species in the $i^{th}$ reaction. Note that if a reactant or a product complex has all its stoichiometric coefficients equal to 0, then it corresponds to the $0-$complex $\phi$.

The general rate equation for any species $X_m\in S$ in such an $n-$species reversible mass-action system will be
\begin{equation}
    \begin{split}
        \frac{dx_m}{dt}=\sum_{i=1}^{|R|}(b_{mi}-a_{mi})\Bigg(k_{i1}\prod_{j=1}^{|S|}x_j^{a_{ji}}-k_{i2}\prod_{j=1}^{|S|}x_j^{b_{ji}}\Bigg),\ where\ b_{mi}>a_{mi}
    \end{split}
    \label{eq:rev_rate_general_nspecies}
\end{equation}

In the above equation, $|R|$ and $|S|$ are the cardinalities of sets $R$ and $S$, respectively. Further details about the general form of reversible reaction networks with one, two, and three species can be found in section \ref{sec:general_eq_123_sp} of the appendix.

Now we formally define the classes of real numbers computable by reversible chemical reaction networks (RevCRNs). Let $x_i(t)$ denote the concentration of species $X_i \in S$ at time $t$. We introduce the three structural variants of computing networks simultaneously below.



\begin{definition}[Open, Semi-Open, and Closed RevCRN Computable Reals] 
Let $\alpha \in \mathbb{R}_{\geq 0}$. We say that $\alpha$ is an \textbf{open} (resp., \textbf{semi-open}, \textbf{closed}) \textbf{RevCRN computable real} if there exists an open (resp., semi-open, closed) system RevCRN $\mathcal{N}=(S,R)$ operating under mass-action kinetics that has a designated output species $X_i \in S$ and a stable fixed point $x^* \in [0,\infty)^{|S|}$ satisfying $x^*_i = \alpha$, such that the following properties hold:
\begin{enumerate} 
    \item \textbf{Natural and Rational Constraint:} The structural parameters of $\mathcal{N}$ satisfy the conditions specified in Definition~\ref{def:revcrn}. 
    \item \textbf{Boundedness:} There exists a constant $\beta \in \mathbb{R}_{>0}$ such that all species concentration trajectories are bounded above by $\beta$, meaning $x_j(t) \leq \beta$ for all $X_j \in S$ and all $t \geq 0$. 
    \item \textbf{Stability:} The fixed point $x^*$ is a stable state of $\mathcal{N}$. 
    \item \textbf{Convergence:} When the network is initialized with an appropriate rational initial vector $x(0) \in \mathbb{Q}_{\geq 0}^{|S|}$, the species concentration trajectory of $\mathcal{N}$ satisfies $\lim_{t \rightarrow \infty} x(t) = x^*$. 
\end{enumerate} 
The sets of all real numbers computable by open, semi-open, and closed RevCRN systems are denoted by $\mathbb{R}^{\mathrm{O}}_{\mathrm{RevCRN}}$, $\mathbb{R}^{\mathrm{SO}}_{\mathrm{RevCRN}}$, and $\mathbb{R}^{\mathrm{C}}_{\mathrm{RevCRN}}$, respectively. 
\label{def:revcrn_variants} 
\end{definition}

\begin{definition}[RevCRN Computabile Reals]
A non-negative real number $\alpha \in \mathbb{R}_{\geq 0}$ is said to be a \textbf{RevCRN computable real} if:
\begin{equation}
\alpha \in \mathbb{R}_{\mathrm{RevCRN}}^{\mathrm{O}} \cup \mathbb{R}_{\mathrm{RevCRN}}^{\mathrm{SO}} \cup \mathbb{R}_{\mathrm{RevCRN}}^{\mathrm{C}}=\mathbb{R}_{\mathrm{RevCRN}}
\end{equation}
where $\mathbb{R}_{\mathrm{RevCRN}}$ denotes the complete set of all real numbers computable by RevCRNs.
\label{def:revcrn_general}
\end{definition}


\begin{definition}[$n$-species RevCRN Computable Reals] 
Let $\alpha \in \mathbb{R}_{\geq 0}$. The number $\alpha$ is said to be \textbf{$n$-species RevCRN computable} if there exists a RevCRN $\mathcal{N}=(S,R)$ such that $\mathcal{N}$ and $\alpha$ satisfy the conditions of Definitions~\ref{def:revcrn_variants} and \ref{def:revcrn_general}, with the total number of species exactly being $|S|=n$. The set of all real numbers computable by $n$-species RevCRNs is denoted by $\mathbb{R}_{\mathrm{RevCRN}}^{ns}$. 
\label{def:n_species_revcrn} 
\end{definition}

\section{Computing Real Numbers using RevCRNs}

This section presents the findings on the RevCRN-computable reals, which represent one of the two key contributions of this work. The results provide a comprehensive overview of the relationship between $\mathbb{R}_{RevCRN}$ and other established classes of computable real numbers, such as $\mathbb{Q}$, $ALG$, $\mathbb{R}_{LCRN}$, and $\mathbb{R}_{RTCRN}$. Additionally, this section emphasizes the inherent boundedness property of RevCRNs.

Considering Definition \ref{def:n_species_revcrn}, we will consider the classes of real numbers $\mathbb{R}_{RevCRN}^{ns}$ and $\mathbb{R}_{RevCRN}^{(n+1)s}$ to begin our discussion with the following proposition.

\begin{proposition}
$\mathbb{R}_{RevCRN}^{ns}\subseteq\mathbb{R}_{RevCRN}^{(n+1)s},\ \forall n\in \mathbb{N}_{>0}$
\label{prop:nested_inclusion_chain}
\end{proposition}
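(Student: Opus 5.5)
The plan is to add a single inert, decoupled species to a given network. Let $\alpha\in\mathbb{R}_{RevCRN}^{ns}$ be witnessed by an $n$-species RevCRN $\mathcal{N}=(S,R)$, with output species $X_i$, stable fixed point $x^*$ satisfying $x^*_i=\alpha$, bound $\beta$, and rational initial vector $x(0)$ whose trajectory converges to $x^*$. Define $\mathcal{N}'=(S',R')$ by $S'=S\cup\{Y\}$, where $Y$ is a fresh species, and
\[
R'=R\cup\{\,Y\xrightleftharpoons[1]{1} 2Y\,\}.
\]
This added reaction involves only non-empty complexes and has positive rational rate constants in both directions. Hence $\mathcal{N}'$ satisfies the reversibility condition of Definition~\ref{def:revcrn} in the same system type (open, semi-open, or closed) as $\mathcal{N}$. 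No $0$-complex interaction is introduced, so a closed system stays closed. The natural and rational constraints are clearly preserved.

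Next we check the dynamics. The species $Y$ does not appear in any reaction of $R$, so the rate equations \eqref{eq:rev_rate_general_nspecies} for $x_1,\dots,x_n$ are unchanged. The new species evolves independently by
\[
\frac{dy}{dt}=y-y^2=y(1-y).
\]
This logistic equation has $y^*=1$ as a stable fixed point, and every solution with $y(0)>0$ converges to it. Initialise $\mathcal{N}'$ with $(x(0),1)\in\mathbb{Q}_{\geq0}^{n+1}$. Then $y(t)\equiv 1$, so the trajectory of $\mathcal{N}'$ is $(x(t),1)\to(x^*,1)$. The designated output is still $X_i$, with $x^*_i=\alpha$. Boundedness holds with constant $\max(\beta,1)$.

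The only step needing care is \emph{stability} of $(x^*,1)$ in $\mathcal{N}'$. The system is a product of two decoupled subsystems, so a neighbourhood perturbation splits into independent perturbations of $x$ and of $y$. Stability of $x^*$ in $\mathcal{N}$ (in whatever sense the paper uses, Lyapunov or asymptotic) and stability of $y^*=1$ for the logistic equation together give stability of the product, using the $L_\infty$ norm: $|(x,y)-(x^*,1)|=\max(|x-x^*|,|y-1|)$. This yields $\alpha\in\mathbb{R}_{RevCRN}^{(n+1)s}$.

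The main obstacle is choosing the dummy dynamics. The simplest choice, an inert $Y$ with no reactions, is perhaps disallowed or gives only neutral (non-attracting) stability. An open inflow/outflow $\phi\rightleftharpoons Y$ would change a closed system into an open one. The autocatalytic reversible pair above avoids both issues. If the stability notion is only Lyapunov, an inert $Y$ would also suffice.
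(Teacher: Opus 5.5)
Your proof is correct. It shares the paper's high-level idea: pad the network with one extra species held at concentration $1$ so the original dynamics are untouched. The mechanism differs, though, in ways that matter.

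The paper inserts $X_{n+1}$ as a catalyst into \emph{every} reaction of $R$, so that $\frac{dx_{n+1}}{dt}=0$ and every rate is multiplied by $x_{n+1}\equiv 1$. You instead leave $R$ untouched and append a decoupled pair $Y\rightleftharpoons 2Y$ whose equilibrium $y^*=1$ is exponentially attracting. The paper's version makes the new species participate in every reaction and keeps a mass-conserving network mass-conserving, but it pays two prices:
\begin{itemize}
\item Its fixed point lies on a line of equilibria $\{(x^*,c)\}$, so it is only neutrally stable in the $x_{n+1}$ direction.
\item More seriously, inserting a catalyst into a one-directional boundary reaction of a semi-open network breaks the definition. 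For example, $X\to\phi$ in the transcendental example of Proposition~\ref{prop:ALG_RevCRN_nequiv} becomes $X+X_{n+1}\to X_{n+1}$. This is a one-directional reaction between two non-empty complexes, which Definition~\ref{def:revcrn} does not allow.
\end{itemize}

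Your construction avoids both issues. Every reversibility condition and the system type carry over verbatim. Because the $y$-subsystem is attracting, $(x^*,1)$ inherits whatever stability notion $x^*$ has (Lyapunov, asymptotic, or exponential) under the $L_\infty$ norm.

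One small imprecision: $Y\rightleftharpoons 2Y$ is not mass-conserving. So ``a closed system stays closed'' holds only in the operational sense of having no $0$-complex reactions. This is harmless, since $\mathbb{R}^{(n+1)s}_{RevCRN}$ is a union over all three system types.
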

\begin{proof}
We establish the general nested inclusion chain by induction on the number of species $n$. Appendix~\ref{sec:general_eq_123_sp} provides an explicit analysis of the rate equations for one, two, and three species RevCRNs, demonstrating that $\mathbb{R}_{\mathrm{RevCRN}}^{1\mathrm{s}} \subseteq \mathbb{R}_{\mathrm{RevCRN}}^{2\mathrm{s}} \subseteq \mathbb{R}_{\mathrm{RevCRN}}^{3\mathrm{s}}$. 

To prove the general inductive step $\mathbb{R}_{\mathrm{RevCRN}}^{n\mathrm{s}} \subseteq \mathbb{R}_{\mathrm{RevCRN}}^{(n+1)\mathrm{s}}$ for any $n \in \mathbb{N}_{>0}$, let $\alpha \in \mathbb{R}_{\mathrm{RevCRN}}^{n\mathrm{s}}$. By Definition~\ref{def:n_species_revcrn}, there exists an $n$-species RevCRN $\mathcal{N} = (S, R)$ with $|S|=n$ that computes $\alpha$. We construct an equivalent $(n+1)$-species network $\mathcal{N}' = (S', R')$ by adding a catalytic species to the species set, $S' = S \cup \{X_{n+1}\}$. For every reaction $\rho \in R$, we define a corresponding reaction $\rho' \in R'$ where $X_{n+1}$ acts as a catalyst. 

Initializing the system with $x_{n+1}(0) = 1 \in \mathbb{Q}_{\geq 0}$, with the mass-action kinetics guaranteeing that $\frac{dx_{n+1}}{dt} = 0$, we get $x_{n+1}(t)=1$ for all $t \geq 0$. As a consequence, the addition of the new catalytic species to the system leaves the dynamics of the original $n$ species RevCRN unaltered. As shown in Appendix~\ref{sec:3s_revcrn} for the three-species case, $\mathcal{N}'$ inherits the exact stability and convergence properties of $\mathcal{N}$. Since $|S'| = n+1$, it follows that $\alpha \in \mathbb{R}_{\mathrm{RevCRN}}^{(n+1)\mathrm{s}}$, which completes the proof.
\end{proof}

\begin{proposition} 
\label{prop:union_equivalence}
$\bigcup_{k=1}^\infty\mathbb{R}_{\mathrm{RevCRN}}^{ks}=\mathbb{R}_{\mathrm{RevCRN}}$ 
\end{proposition}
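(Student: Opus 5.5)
The plan is to prove the set equality by double inclusion. Both directions reduce to unfolding Definitions~\ref{def:revcrn_variants}, \ref{def:revcrn_general} and \ref{def:n_species_revcrn}, together with the finiteness of the species set required of every CRN $\mathcal{N}=(S,R)$.

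For the inclusion $\bigcup_{k=1}^\infty\mathbb{R}_{\mathrm{RevCRN}}^{ks}\subseteq\mathbb{R}_{\mathrm{RevCRN}}$, take $\alpha$ in the union. Then $\alpha\in\mathbb{R}_{\mathrm{RevCRN}}^{ks}$ for some $k\in\mathbb{N}_{>0}$. By Definition~\ref{def:n_species_revcrn}, there is a RevCRN $\mathcal{N}$ with $|S|=k$ that satisfies the conditions of Definitions~\ref{def:revcrn_variants} and \ref{def:revcrn_general}. This means $\mathcal{N}$ is an open, semi-open, or closed RevCRN computing $\alpha$, so $\alpha$ lies in $\mathbb{R}^{\mathrm{O}}_{\mathrm{RevCRN}}\cup\mathbb{R}^{\mathrm{SO}}_{\mathrm{RevCRN}}\cup\mathbb{R}^{\mathrm{C}}_{\mathrm{RevCRN}}=\mathbb{R}_{\mathrm{RevCRN}}$. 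This direction is immediate.

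For the reverse inclusion, take $\alpha\in\mathbb{R}_{\mathrm{RevCRN}}$. By Definition~\ref{def:revcrn_general}, $\alpha$ lies in one of the three variant classes. So there is a witnessing RevCRN $\mathcal{N}=(S,R)$ of the corresponding type, with an output species, a stable fixed point $x^*$ satisfying $x^*_i=\alpha$, and the boundedness, stability and convergence properties.

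Since $S$ is finite by the definition of a DCRN, and since it contains at least the designated output species $X_i$, we may set $n=|S|\in\mathbb{N}_{>0}$. The same network $\mathcal{N}$, unchanged, then witnesses $\alpha\in\mathbb{R}_{\mathrm{RevCRN}}^{ns}$ by Definition~\ref{def:n_species_revcrn}. Hence $\alpha$ lies in the union.

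No step here is a real obstacle. The only point needing care is that Definition~\ref{def:n_species_revcrn} asks for \emph{exactly} $n$ species. Taking $n=|S|$ for the given network handles this automatically, so no padding is needed. If one instead wanted to land in a prescribed larger index $m\ge n$, Proposition~\ref{prop:nested_inclusion_chain}, applied repeatedly, gives $\mathbb{R}_{\mathrm{RevCRN}}^{ns}\subseteq\mathbb{R}_{\mathrm{RevCRN}}^{ms}$.

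That monotonicity also gives a useful remark: the union is an increasing union of nested classes. This sets up the hierarchy question studied in Section~4, although the equality itself does not need it.
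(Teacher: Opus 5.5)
Your proof is correct and follows the paper's argument: double inclusion, with the forward direction obtained by unfolding Definition~\ref{def:n_species_revcrn}, and the reverse direction by taking $n=|S|$ for the witnessing network, using the finiteness of the species set. Your observation that no padding via Proposition~\ref{prop:nested_inclusion_chain} is needed is accurate, and the paper's proof likewise does not use it.
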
 

\begin{proof}
We establish the identity by demonstrating mutual set inclusion between the sets of RevCRN computable reals $\bigcup_{k=1}^\infty\mathbb{R}_{\mathrm{RevCRN}}^{ks}$, and $\mathbb{R}_{\mathrm{RevCRN}}$.

\smallskip\noindent
\textbf{($\subseteq$)} Let $\alpha \in \bigcup_{k=1}^\infty\mathbb{R}_{\mathrm{RevCRN}}^{ks}$. By the construction of $\bigcup_{k=1}^\infty\mathbb{R}_{\mathrm{RevCRN}}^{ks}$, we know that there must exist an integer index $n \in \mathbb{N}_{>0}$ such that $\alpha \in \mathbb{R}_{\mathrm{RevCRN}}^{ns}$. According to Definition~\ref{def:n_species_revcrn}, this implies the existence of a valid RevCRN $\mathcal{N}$ that satisfies the criteria of Definitions~\ref{def:revcrn_variants} and \ref{def:revcrn_general}. As a result, $\alpha$ is a RevCRN computable real, yielding $\alpha \in \mathbb{R}_{\mathrm{RevCRN}}$.

\smallskip\noindent
\textbf{($\supseteq$)} Conversely, let $\alpha \in \mathbb{R}_{\mathrm{RevCRN}}$. By Definition~\ref{def:revcrn_general}, $\alpha$ is computed by some underlying RevCRN $\mathcal{N}=(S,R)$. Because any physically or mathematically well-defined chemical reaction network is built upon a finite set of distinct species, the cardinality of its species set must be a finite integer, say $|S| = n \in \mathbb{N}_{>0}$. It follows immediately from Definition~\ref{def:n_species_revcrn} that $\alpha \in \mathbb{R}_{\mathrm{RevCRN}}^{ns}$. Since $\mathbb{R}_{\mathrm{RevCRN}}^{ns} \subseteq \bigcup_{k=1}^\infty\mathbb{R}_{\mathrm{RevCRN}}^{ks}$, we conclude that $\alpha \in \bigcup_{k=1}^\infty\mathbb{R}_{\mathrm{RevCRN}}^{ks}$.

Combining both inclusion directions completes the proof.
\end{proof}

\begin{lemma}[Boundedness of Species Concentration of any 2-species RevCRN]
    \label{lem:bound}
    Consider the general rate equation of the two-species CRN \eqref{eq:2sp_general}. Then the state variables $x$ and $y$ are bounded. That is, for any given positive initial condition $(x_{0},y_{0})$, there exists a time $t_{1}(x_{0},y_{0})$, and a finite constant $C$ (time-independent but depending on the problem parameters), such that for all time $t > t_{1}$,
    \[\sup X\leq C,\ \sup Y\leq C,\ where\ X=\{x(t)|\forall t\geq 0\}\ and\ Y=\{y(t)|\forall t\geq 0\}\]
\end{lemma}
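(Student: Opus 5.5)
The plan is to build a compact trapping region $[0,M]^2$ in the phase plane of \eqref{eq:2sp_general}, where $M$ depends only on the rate constants and stoichiometry, and then take $C=\max\{M,x_0,y_0\}$. Because a trajectory starting in $\mathbb{R}^2_{\geq 0}$ stays there, a bound valid for all $t>t_1$, together with continuity on the compact interval $[0,t_1]$, bounds $\sup X$ and $\sup Y$. I will use the standing assumption of the paper that the network has a stable positive fixed point. This rules out pathological semi-open inflows such as $\phi\rightarrow X$ with no compensating outflow, for which the statement fails, so I will state explicitly that this restriction is in force.

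The first step is to orient every reversible pair in \eqref{eq:revcrn_reaction_equation} so that the product complex has total molecularity at least that of the reactant complex. Each reversible pair then contributes to $x+y$ the amount $(|b|-|a|)\bigl(k_{1}x^{a_1}y^{a_2}-k_{2}x^{b_1}y^{b_2}\bigr)$. Mass-preserving pairs, with $|a|=|b|$, contribute nothing. Pairs with $|b|>|a|$ contribute a positive monomial of lower degree and a negative monomial of strictly higher degree. The boundary pairs $\phi\rightleftharpoons X$ and $\phi\rightleftharpoons Y$ contribute a constant minus a linear term.

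The second step is to take a weighted Lyapunov-type function $V=w_1x+w_2y$ with $w_i>0$, starting from $w_1=w_2=1$. I would show that $\dot V\leq c_0-\varepsilon V$ outside a compact set. Along rays of the quadrant, in polar form $(r\cos\theta,r\sin\theta)$, each pair contributes its negative top-degree term, which scales like $r^{|b|}$, against positive terms of degree at most $|a|<|b|$. Comparing top degrees gives the inequality for $r$ large, with constants uniform in $\theta$. Once $\dot V\leq c_0-\varepsilon V$ holds outside a compact set, Grönwall's inequality gives $V(t)\leq \max\{V(0),c_0/\varepsilon\}+o(1)$, which yields $t_1$ and $C$.

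The main obstacle is uniformity in $\theta$ near the axes. A negative monomial such as $-k_2x^{b_1}y^{b_2}$ with $b_2>0$ vanishes on $y=0$. A pair like $Y\rightleftharpoons 2X$ then contributes $k_1y-k_2x^2$, which is positive when $y\gg x$, so the top-degree term does not dominate everywhere. To handle this, I would first try tuning the weights $w_1,w_2$ so that such cross terms are absorbed. For $Y\rightleftharpoons 2X$, for example, the weights $(1,2)$ make the pair mass-conserving, so its contribution to $V$ vanishes. If no single choice of weights works, I would fall back on checking the vector field directly on the faces of the square. On the face $x=M$, the sign of $\dot x$ is governed by the monomials of highest pure $x$-degree. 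On the face $y=M$, it is governed by those of highest $y$-degree. The existence of a stable fixed point, together with the reversible pairing, should force these leading terms to be negative for $M$ large, so the square is forward-invariant. I expect this case analysis on which monomials survive on each face to be where most of the work lies.
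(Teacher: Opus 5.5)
\textbf{Verdict: genuine gap.} Your worry about the axes is justified, but neither device you propose resolves it, and the extra hypothesis does not rescue the statement.

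\textbf{The stable-fixed-point hypothesis is not enough.} A stable positive fixed point does not rule out unbounded trajectories from positive data in a semi-open network. Take $X\xrightarrow{2}\phi$, $2X\xrightarrow{3}3X$, $3X\xrightarrow{1}2X$, $\phi\xrightarrow{1}Y$, $X\xrightarrow{1}X+Y$, $X+Y\xrightarrow{1}X$, which is admissible under Definition~\ref{def:revcrn}. Then $\dot x=-x(x-1)(x-2)$ and $\dot y=1+x-xy$.
\begin{itemize}
\item The point $(2,\tfrac32)$ is exponentially stable, with eigenvalues $-2,-2$.
\item For $x(0)=\tfrac12$, $y(0)=1$, we get $x(t)\to 0$ exponentially, so $\Phi=\int_0^\infty x\,dt<\infty$.
\item Since $\frac{d}{dt}\bigl(y e^{\int_0^t x}\bigr)=e^{\int_0^t x}(1+x)\ge 1$, we get $y(t)\ge (1+t)e^{-\Phi}\to\infty$.
\end{itemize}
On the face $y=M$ near $x=0$ one has $\dot y\approx 1>0$ for every $M$, so no square is invariant. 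In the semi-open case the statement must be restricted to fully reversible (open or closed) networks, or boundedness must be assumed, as Definition~\ref{def:revcrn_variants} in fact does.

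\textbf{Neither a linear $V$ nor the square works, even for fully reversible networks.} The conclusion is true there, but your tools fail. Consider the closed network $2Y\rightleftharpoons 3X$ (constant $k_3$ for $2Y\to 3X$, $k_4$ back) together with $Y\rightleftharpoons 2X$ ($k_1$ for $Y\to 2X$, $k_2$ back). Set $u=k_3y^2-k_4x^3$ and $v=k_1y-k_2x^2$, so $\dot x=3u+2v$ and $\dot y=-2u-v$. For $V=w_1x+w_2y$,
\[\dot V=(3w_1-2w_2)u+(2w_1-w_2)v.\]
\begin{itemize}
\item On the $y$-axis, $\dot V=(3w_1-2w_2)k_3y^2+(2w_1-w_2)k_1y$. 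This is eventually nonpositive only if $2w_2>3w_1$.
\item On the $x$-axis, $\dot V=-(3w_1-2w_2)k_4x^3-(2w_1-w_2)k_2x^2$. This requires $2w_2\le 3w_1$.
\end{itemize}
So no weights give even $\dot V\le 0$ outside a compact set, let alone $\dot V\le c_0-\varepsilon V$. The square fails as well: at its corner $(M,M)$, $\dot y=2k_4M^3+O(M^2)>0$. Your claim that the face $y=M$ is governed by the top $y$-degree terms ignores the $x^3$ monomial near the corner.

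Yet this network has a unique stable, detailed-balanced positive equilibrium (where $u=v=0$), and it is bounded. Since $\frac{d}{dt}(x+2y)=-u$ and $\frac{d}{dt}(2x+3y)=v$, the quadrilateral
\[\{x,y\ge 0,\ x+2y\le x_*+2y_*,\ 2x+3y\le 2x_*+3y_*\}\]
is forward invariant whenever $u(x_*,y_*)\ge 0\ge v(x_*,y_*)$. Such corners exist with $x_*$ arbitrarily large, because $x_*^{3/2}\ll x_*^2$.

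\textbf{The missing idea.} You need a polygonal trapping region whose edge normals change with direction, adapted to the reactions that dominate in each angular sector. This is the endotactic argument of Craciun, Nazarov and Pantea for two-species weakly reversible networks. In the complex-balanced case, the Horn--Jackson free energy also gives bounded sublevel sets.

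\textbf{Comparison with the paper.} The paper's Young-plus-Gr\"onwall proof avoids this obstruction only because it assumes $c_i>a_i$ and $d_i>b_i$ for every pair in \eqref{eq:2sp_general}. Your own example $Y\rightleftharpoons 2X$ shows that this assumption is not without loss of generality.
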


\begin{proof}
     The main idea of this proof is based on the observation that, for a reference species $X_m$ in the general rate equation \ref{eq:rev_rate_general_nspecies}, the condition $b_{mi} > a_{mi}$ always holds. This condition essentially leads to logistic-type control for that species.

    To clarify, the general equation can be separated into two terms, after which a time-scaling can be applied via a change of time variable. By employing Young's inequality \cite{sell2002dynamics}, we can establish bounds on the growth and decay terms of the rate equation. Following this, proof by contradiction can be used, along with Grönwall's lemma \cite{perko2013differential}, to prove the statement of this lemma ultimately. The detailed proof is provided in Section \ref{proof:bound} of the appendix.
\end{proof}

\begin{corollary}[Boundedness of Species Concentration of any n-species RevCRN]
     \label{lem:bound11}
    Consider the general rate equation of the n-species CRN. Then the state variables $x_{i}$ are bounded. That is, for any given positive initial condition  there exists a time $t_{1}$, depending on the initial condition and a finite constant $C$ (time-independent but depending on the problem parameters), such that for all time $t > t_{1}$,
    $\sup X\leq C$, where \ $X=\ \lbrace x_{i}(t)|\forall t\geq 0\ , \forall i\rbrace\ $
\end{corollary}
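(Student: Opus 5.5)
The plan mirrors the 2-species argument of Lemma~\ref{lem:bound}, applied species by species using the general rate equation \eqref{eq:rev_rate_general_nspecies}. Fix a species $X_m$. Each reaction contributes $(b_{mi}-a_{mi})\bigl(k_{i1}\prod_j x_j^{a_{ji}}-k_{i2}\prod_j x_j^{b_{ji}}\bigr)$, with the orientation chosen so that $b_{mi}>a_{mi}$. This means every production term of $X_m$ is paired with a degradation term carrying a strictly higher power of $x_m$. This pairing is the ``logistic-type control'' already exploited in Lemma~\ref{lem:bound}.

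\textbf{Step 1.} Rather than bound each $x_m$ separately, I would bound a Lyapunov-type sum $V=\sum_m c_m x_m$ with positive weights, or simply $V=\sum_m x_m$. This gives
\[
\frac{dV}{dt}=\sum_i \Delta_i\Bigl(k_{i1}x^{a_i}-k_{i2}x^{b_i}\Bigr),
\]
where $\Delta_i=\sum_m c_m(b_{mi}-a_{mi})$. Orient each reversible reaction so that $\Delta_i\ge 0$. Reactions with $\Delta_i=0$ drop out. For the remaining reactions, the production monomial $x^{a_i}$ is dominated componentwise by the degradation monomial $x^{b_i}$ wherever $b_i\ge a_i$ componentwise.

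\textbf{Step 2.} Where the monomials are not componentwise comparable, I would apply Young's inequality, exactly as in the appendix proof of Lemma~\ref{lem:bound}. The goal is to absorb each production term into $\varepsilon\,x^{b_i}$ plus a term of lower degree. Summing these estimates gives a differential inequality of the form $V'\le K - \delta V^{p}$ with $p\ge1$, for $V$ large. After a time rescaling if needed, a comparison or Grönwall argument then shows that $V$ enters and remains in a ball of radius $C$ after some time $t_1$ that depends on the initial data. Since $x_i\le V/\min c_m$ and all concentrations are nonnegative, the uniform bound $C$ holds for every species. For finite times $t\le t_1$, boundedness follows because the solution exists on $[0,t_1]$ and so cannot blow up there. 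Together these give the stated $\sup X\le C$.

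\textbf{Main obstacle.} The hard step is Step~2 for general $n$, where monomials in different species are incomparable. A production term such as $x_1^{2}$ can be paired with a degradation term such as $x_2^{3}$, and Young's inequality then fails to produce a coercive $-\delta V^p$. My first attempt would be to choose the weights $c_m$ through a linear program so that every reaction with $\Delta_i>0$ has a reactant complex lying in the convex hull of lower degree. If that fails, I would fall back on induction on $n$, in the same way the 2-species case was obtained. The idea is to bound one species $X_m$ using its logistic term while treating the others as bounded inputs (by the inductive hypothesis on subsystems), then close the argument by contradiction, supposing some $x_m$ is unbounded and taking the fastest-growing species, as in the proof of Lemma~\ref{lem:bound}.
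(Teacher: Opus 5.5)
\textbf{There is a genuine gap: the statement cannot be proved in the generality in which it is posed.}

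First, Definition~\ref{def:revcrn} admits semi-open networks with one-way inflow, e.g.\ the single reaction $\phi\xrightarrow{k}X$. For it, $x(t)=x(0)+kt$ is unbounded. Your Step~1 tacitly assumes that every production monomial has a degradation partner with positive rate, and this is exactly what semi-open systems lack.

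Second, the inequality $\dot V\le K-\delta V^{p}$ you aim for would give an absorbing ball whose radius $C$ does not depend on $x(0)$. That is impossible under any positive conservation law. In the closed network $X\rightleftharpoons Y$ the sum $x+y$ never changes. Likewise, the catalytic species added in Proposition~\ref{prop:nested_inclusion_chain} stays at its initial value forever. (The statement's own $\sup$ over all $t\ge 0$ includes $t=0$, so it already forces $C\ge\max_i x_i(0)$.) At best one can hope for trajectory-wise bounds, with $C$ depending on $x(0)$, for fully reversible (open or closed) networks.

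Even after these restrictions, Step~2 fails already in the case you treat as easy. A degradation monomial $x^{b_i}=x^{a_i}x^{b_i-a_i}$ dominates $x^{a_i}$ only where $x^{b_i-a_i}$ is bounded below, i.e.\ away from the faces of the orthant. But $V$ can be large near those faces.

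Take the network of Proposition~\ref{prop:ALG_RevCRN_nequiv} with $k_2>0$, so it is fully reversible and open, with $\dot x=k_2-k_1x$ and $\dot y=xy(k_3-k_4y)$. Both reactions are componentwise comparable. Yet for $V=c_xx+c_yy$ with any positive weights, at the state $x=R^{-3}$, $y=R$ one gets $\dot V=c_x(k_2-k_1R^{-3})+c_yR^{-2}(k_3-k_4R)\to c_xk_2>0$ while $V\to\infty$.

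This system is bounded, but only through a tiered argument: first $x$ settles away from $0$, and only then does the logistic term of $y$ take effect. That requires lower bounds near the boundary, which no Young/Gr\"onwall estimate on a linear functional supplies. Your fallbacks do not repair this:
\begin{itemize}
\item The linear-program fallback only chooses the weights, and the example above defeats all weights.
\item The induction fallback is circular. The remaining $n-1$ species do not form an autonomous RevCRN, since their equations involve $x_m$, so the inductive hypothesis cannot be applied to them.
\item A per-species contradiction using the logistic term of $X_m$ also needs lower bounds on the other species appearing in that term.
\end{itemize}

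For incomparable monomials in arbitrary dimension, the corrected statement is a special case of the boundedness part of the long-standing permanence conjecture for weakly reversible mass-action systems. Established results there cover special classes such as single-linkage-class, two-species, or strongly endotactic networks. So a genuinely new idea is needed, not a routine extension.

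The paper's own proof does not help here. It only asserts that Lemma~\ref{lem:bound} extends directly, and that lemma's argument has the same defects. It uses a single exponent ratio $c_{|R|}d_{|R|}/(a_{|R|}b_{|R|})$ in Young's inequality for multivariate monomials. It also negates boundedness as ``$x,y>K$ for all large $t$'', which is not the correct negation.
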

\begin{proof}
The proof is a direct extension of Lemma \ref{lem:bound} to the n-species case, and follows the same structure.
\end{proof}

Specifically for the following Lemma \ref{lem:rational}, we will restrict the rational constraints of the RevCRN to natural constraints. This means that we will only consider natural numbers for the initial concentrations of species and for the rate constants. This approach is intended to prevent the situation where the desired solution could be encoded in the rate constants or the initial concentrations, giving the RevCRN an unfair advantage.

\begin{lemma}
    Fully open RevCRNs with only one species can compute all rational numbers. In particular, $\mathbb{Q}\subsetneqq\mathbb{R}^{1s}_{RevCRN}$.
    \label{lem:rational}
\end{lemma}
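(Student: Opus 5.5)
Since the statement restricts to non-negative reals and natural-number rate constants and initial concentrations, the plan is to show two things. First, every non-negative rational $p/q$ (with $p\in\mathbb{N}$, $q\in\mathbb{N}_{>0}$) lies in $\mathbb{R}^{1s}_{RevCRN}$. Second, some irrational number also lies in that class.

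\textbf{Rationals.} For the first part I use the fully open one-species network
\[\phi \xrightleftharpoons[q]{p} X,\]
which is reversible in the sense of Definition~\ref{def:revcrn}: both rate constants are positive when $p>0$. The mass-action ODE is $\frac{dx}{dt}=p-qx$, with explicit solution $x(t)=\frac{p}{q}+\left(x(0)-\frac{p}{q}\right)e^{-qt}$. From this I check the conditions of Definition~\ref{def:revcrn_variants} directly:
\begin{itemize}
\item \emph{Boundedness:} $x(t)\le\max\{x(0),p/q\}$, so I may take $\beta=\max\{x(0),p/q\}$ (this is also consistent with Corollary~\ref{lem:bound11}).
\item \emph{Stability:} the fixed point $x^*=p/q$ is globally exponentially stable.
\item \emph{Convergence:} $x(t)\to p/q$ from the natural initial value $x(0)=0$ or $x(0)=1$.
\end{itemize}
The value $0$ needs separate care, because $p=0$ would kill the forward rate. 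One option is to allow the semi-open variant $X\to\phi$ with rate $1$, whose solution decays to $0$. Alternatively, $0$ can be excluded if the paper intends positive rationals only. This gives $\mathbb{Q}_{\ge 0}\subseteq\mathbb{R}^{1s}_{RevCRN}$.

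\textbf{Strictness.} For the second part I exhibit an irrational number computed by a fully open one-species RevCRN with natural parameters. The natural choice is $\sqrt{2}$, via
\[\phi\xrightleftharpoons[1]{2} 2X,\]
which has rate equation $\frac{dx}{dt}=2\left(2-x^2\right)$ by \eqref{eq:rev_rate_general_nspecies}; the factor $2=b-a$ is harmless. The fixed point $x^*=\sqrt{2}$ is the unique non-negative root of the right-hand side. The right-hand side is positive on $[0,\sqrt2)$ and negative on $(\sqrt2,\infty)$. Its derivative at the fixed point is $-4\sqrt2<0$, so $x^*$ is exponentially stable. A one-dimensional phase-line argument then shows that every trajectory starting at $x(0)=0$ (or any natural $x(0)$) increases or decreases monotonically to $\sqrt2$ and stays bounded by $\max\{x(0),\sqrt2\}$. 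Hence $\sqrt2\in\mathbb{R}^{1s}_{RevCRN}\setminus\mathbb{Q}$, which gives the strict inclusion.

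\textbf{Main obstacle.} I expect the only delicate points to be definitional rather than analytic:
\begin{itemize}
\item The case $\alpha=0$ under a fully open, strictly positive-rate requirement. I would treat it either by the semi-open convention or by noting that the statement concerns positive rationals.
\item Confirming that the phase-line convergence argument starts at $x(0)=0$, a boundary point, with no degeneracy. It does, since $\frac{dx}{dt}(0)=4>0$, so the trajectory immediately enters the interior and the argument goes through.
\end{itemize}
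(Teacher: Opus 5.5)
Your proposal is correct and takes essentially the same route as the paper. For the rationals you use the fully open network $\phi\xrightleftharpoons[q]{p}X$, with $dx/dt=p-qx$ and limit $p/q$. For strictness you use $\phi\xrightleftharpoons[q]{p}2X$, with limit $\sqrt{p/q}$, which you instantiate as $\sqrt{2}$.

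Your additions are refinements the paper's proof omits, not a different argument. You keep the stoichiometric factor $2$ in the rate equation, check boundedness, stability and convergence explicitly, and note that the construction does not cover the value $0$ without the semi-open variant.
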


\begin{proof}

     Consider the following RevCRN
    \begin{equation}
        \begin{split}
            \phi&\xrightleftharpoons[p]{q} X
        \end{split}
        \label{revcrn:rationals}
    \end{equation}
    Here, $p,q\in\mathbb{N}$. This leads to the following ODE.
    \begin{equation*}
        \frac{dx}{dt}=p-qx
    \end{equation*}
    Considering $x(0)=0\in\mathbb{N}$, we get the following solution of the above ODE
    \begin{equation*}
        \begin{split}
            &x(t)=\frac{1}{q}(p-e^{-k_1(t+c)})\\
            \Rightarrow &\lim_{t\rightarrow\infty}x(t)=\frac{p}{q}
        \end{split}
    \end{equation*}

    Here, $c$ is the integral constant. This construction shows that we can create any rational number of the form $\frac{p}{q}$ where $p,q\in\mathbb{N}$.

    Let us now consider the following fully open reversible reaction.
    \begin{equation}
        \begin{split}
            \phi&\xrightleftharpoons[p]{q} 2X
        \end{split}
        \label{revcrn:irrationals}
    \end{equation}
    This changes the ODE to
    \begin{equation*}
        \frac{dx}{dt}=p-qx^2,\quad p,q\in\mathbb{N}
    \end{equation*}

    On analyzing this ODE, we can find that when steady state is achieved, i.e., $\frac{dx}{dt}=0$ as $t\rightarrow\infty$, we get $\lim_{t\rightarrow\infty}x(t)=\sqrt{\frac{p}{q}}$. This proves the lemma's statement.
\end{proof}

\begin{definition}[\cite{lang_algebra_2002}, \cite{marcus2018number}]
    \label{def:alg}
    A number $\alpha\in\mathbb{C}$ is said to be algebraic if it is a root of an irreducible polynomial of degree $n\in\mathbb{N}$ over $\mathbb{Q}$, i.e.,
    \[\mathcal{P}(\alpha)=a_0+a_1\alpha+\dots +a_n\alpha^n=0,\quad \forall i, a_i\in\mathbb{Q}\]
\end{definition}
In the context of CRNs, this study will only consider positive algebraic numbers in $\mathbb{R}$. We denote the set of all positive algebraic numbers in \(\mathbb{R}\) as \(ALG\) \cite{Fletcher2025_ALG_LCRN, Huang2019}.

For the following proposition, let \(\mathcal{P}(x)\) be an arbitrary single-variable polynomial of degree $n$ represented in the following form:
\[\mathcal{P}(x)=\sum_{i=0}^na_ix^i,\quad \forall a_i\in \mathbb{Q},a_n> 0\]

\begin{proposition}
   Consider the set \(root(\mathcal{P}) = \{\alpha \mid \alpha \in \mathbb{R}_{>0}, \mathcal{P}(\alpha) = 0\}\). If we construct another polynomial \(\mathcal{Q}(x) = \mathcal{P}(x)(k - x)\) for some \(k \in \mathbb{Q}\), then we have \(root(\mathcal{Q}) = root(\mathcal{P}) \cup \{k\}\). Furthermore, if \(k > \alpha\) for any \(\alpha \in root(\mathcal{P})\), then the stability remains invariant for all \(\alpha' < k\) where \(\alpha' \in root(\mathcal{P})\).
    \label{prop:stability_invariance}
\end{proposition}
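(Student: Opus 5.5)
Throughout, I read ``stability'' in the one-dimensional sense that the paper uses for single-species RevCRNs: each polynomial is identified with the autonomous ODE $\frac{dx}{dt}=\mathcal{P}(x)$ (respectively $\frac{dx}{dt}=\mathcal{Q}(x)$). A positive root $\alpha$ is classified as stable, unstable or semi-stable according to the sign pattern of the right-hand side on a small punctured neighbourhood of $\alpha$. I also assume, as the statement implicitly requires for $k\in root(\mathcal{Q})$, that $k>0$. If $k\le 0$, the same argument simply gives $root(\mathcal{Q})=root(\mathcal{P})$.

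The first claim is purely algebraic. Since $\mathbb{R}$ is an integral domain, $\mathcal{Q}(x)=\mathcal{P}(x)(k-x)=0$ holds if and only if $\mathcal{P}(x)=0$ or $x=k$. Intersecting with $\mathbb{R}_{>0}$ gives $root(\mathcal{Q})=root(\mathcal{P})\cup\{k\}$. If $k$ already belongs to $root(\mathcal{P})$, the union absorbs it and only the multiplicity changes.

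For the stability claim, fix $\alpha'\in root(\mathcal{P})$ with $\alpha'<k$. I will give two arguments.
\begin{enumerate}
\item \emph{Simple root, by linearisation.} Differentiating and using $\mathcal{P}(\alpha')=0$ gives $\mathcal{Q}'(\alpha')=\mathcal{P}'(\alpha')(k-\alpha')$. Because $k-\alpha'>0$, the sign of $\mathcal{Q}'(\alpha')$ equals that of $\mathcal{P}'(\alpha')$. Hence $\alpha'$ is exponentially stable, i.e.\ has negative derivative, for $\mathcal{Q}$ exactly when it is for $\mathcal{P}$.
\item \emph{Any root, by a sign argument.} This also covers repeated roots, where linearisation is inconclusive. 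By continuity, choose $\varepsilon>0$ with $\alpha'+\varepsilon<k$. Then $k-x>0$ on $(\alpha'-\varepsilon,\alpha'+\varepsilon)$, so $\operatorname{sign}\mathcal{Q}(x)=\operatorname{sign}\mathcal{P}(x)$ on that neighbourhood. The one-dimensional phase portraits of the two ODEs therefore agree locally around $\alpha'$, flow directions included. Consequently attraction, repulsion and semi-stability are all preserved. In addition, the multiplicity of $\alpha'$ is unchanged, since $(k-x)$ does not vanish at $\alpha'$.
\end{enumerate}

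Finally, I would add a remark connecting this to the construction that presumably follows. Take $k$ larger than every root of $\mathcal{P}$. Since $a_n>0$, we have $\mathcal{P}>0$ on $(\max root(\mathcal{P}),\infty)$, so $\mathcal{Q}>0$ just below $k$ and $\mathcal{Q}<0$ for $x>k$. Thus $k$ is itself a stable root, and trajectories are bounded above, because the leading coefficient of $\mathcal{Q}$ is $-a_n<0$.

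The main obstacle is not computational; it is pinning down what ``stability'' means for non-hyperbolic, repeated roots. That is why I rely on the sign-based criterion in the second argument rather than only the derivative test in the first. A related point is the need to choose $\varepsilon$ so that the neighbourhood of $\alpha'$ stays strictly below $k$. This is exactly where the hypothesis $\alpha'<k$ enters.
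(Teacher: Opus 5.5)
Your proof is correct and follows the paper's route. For the root sets you use that $\mathcal{Q}(x)=0$ exactly when $\mathcal{P}(x)=0$ or $x=k$, and for stability you use $\mathcal{Q}'(\alpha')=(k-\alpha')\mathcal{P}'(\alpha')$, where $k-\alpha'>0$ preserves the sign of the eigenvalue. Two points go beyond the paper. Your local sign argument, which shows the $\mathcal{Q}$-flow is a positive time-rescaling of the $\mathcal{P}$-flow near $\alpha'$, also covers non-hyperbolic (repeated) roots, where the paper's eigenvalue test is silent. Your caveat that $k>0$ is needed for $k\in root(\mathcal{Q})$ is a legitimate correction to the statement as written.
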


\begin{proof}
    From the construction of $\mathcal{Q}(x)$ it is easy to observe that $\mathcal{Q}(x)=0$ whenever $x=k$ or $\mathcal{P}(x)=0$, i.e., $x\in root(\mathcal{P})$. This shows that $root(\mathcal{Q}) = root(\mathcal{P}) \cup \{k\}$.
    
    Now we proceed to do the stability analysis of roots of $\mathcal{Q}(x)$. We first compute the first-order derivative of $\mathcal{Q}(x)$ with respect to $x$.
    \[\mathcal{Q}'(x)=\frac{d\mathcal{Q}}{dx}=\frac{d}{dx}\bigg((k-x)\mathcal{P}(x)\bigg)=(k-x)\mathcal{P}'(x)-\mathcal{P}(x)\]
    It can be shown that for polynomial system $\mathcal{P}(x)$, the eigenvalues $\lambda_p=\mathcal{P}'(x^*)$ where $\mathcal{P}(x^*)=0$. Similarly, for the system represented by the polynomial $\mathcal{Q}(x)$, the eigenvalues at points $\forall x^*\in root(\mathcal{Q})$ are
    \[\lambda_q=\mathcal{Q}'(x^*)=(k-x^*)\mathcal{P}'(x^*)=(k-x^*)\lambda_p,\ \forall x^*\in root(\mathcal{P})\]
    Note that the positivity or negativity of eigenvalues is preserved when $x^*<k$ and altered when $x^*>k$. We know that $\forall x^*\in ALG$, $\exists k\in \mathbb{Q}: k>x^*$. This proves the proposition.
\end{proof}

\begin{lemma}
    $ALG\subseteq \mathbb{R}^{1s}_{RevCRN}$
    \label{lem:alg}
\end{lemma}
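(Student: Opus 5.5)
Given a positive algebraic $\alpha$, we will build a one-species RevCRN, necessarily open or semi-open since only the $0$-complex can interact with $X$, whose rate polynomial has $\alpha$ as a stable root.

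\textbf{Step 1: choose the polynomial.} Let $\mathcal{P}$ be the minimal polynomial of $\alpha$ over $\mathbb{Q}$, scaled so its coefficients are integers. Since $\mathcal{P}$ is irreducible, $\alpha$ is a simple root, so $\mathcal{P}'(\alpha)\neq 0$. Replacing $\mathcal{P}$ by $-\mathcal{P}$ if necessary, arrange $\mathcal{P}'(\alpha)<0$. Then $\alpha$ is a hyperbolic, asymptotically (indeed exponentially) stable equilibrium of $\dot x=\mathcal{P}(x)$.

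\textbf{Step 2: fix global behaviour with Proposition~\ref{prop:stability_invariance}.} The sign of the leading coefficient and of $\mathcal{P}(0)$ may prevent boundedness, or may allow $0$ to be absorbing. Choose a rational $k$ larger than every positive root of $\mathcal{P}$. Set $\mathcal{Q}(x)=(k-x)\mathcal{P}(x)$, multiplying by a further factor $(k'-x)$ if needed to make the leading coefficient negative. By Proposition~\ref{prop:stability_invariance}, $\alpha$ remains a root of $\mathcal{Q}$ and its stability is unchanged, because $\mathcal{Q}'(\alpha)=(k-\alpha)\mathcal{P}'(\alpha)<0$. With a negative leading coefficient, $\mathcal{Q}(x)<0$ for all large $x$, which gives boundedness and is consistent with Lemma~\ref{lem:bound}.

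Let $\gamma$ and $\delta$ be the largest root of $\mathcal{Q}$ below $\alpha$ (or $0$ if none) and the smallest root above $\alpha$. Since $\alpha$ is a simple stable root, $\mathcal{Q}>0$ on $(\gamma,\alpha)$ and $\mathcal{Q}<0$ on $(\alpha,\delta)$, so $(\gamma,\delta)$ lies in the basin of attraction of $\alpha$. Pick a rational $x(0)\in(\gamma,\delta)$. Rationals are dense and the basin is a nondegenerate interval, so such a point exists. The scalar ODE is monotone on this interval, hence $x(t)\to\alpha$. This gives the convergence and stability clauses of Definition~\ref{def:revcrn_variants}.

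\textbf{Step 3: realize $\dot x=\mathcal{Q}(x)$ as a one-species reversible network.} Write $\mathcal{Q}(x)=\sum_i c_ix^i$ with rational $c_i$. Each reaction $\phi\rightleftharpoons nX$ with rates $q_n$ (forward) and $p_n$ (backward) contributes $n(q_n-p_nx^n)$. This yields a positive constant term together with negative $x^n$ terms, but it cannot produce positive $x^n$ terms for $n\ge1$.

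To obtain a positive coefficient on $x^i$, use the reversible reaction $iX\rightleftharpoons jX$ with $j>i$. It contributes $(j-i)\bigl(a x^i-bx^j\bigr)$, where $a$ and $b$ are its forward and backward rate constants. The unwanted negative term $-(j-i)bx^j$ can be absorbed by choosing $j$ to be the top degree, or any degree whose coefficient is negative. Choose $j=N>\deg\mathcal{Q}$, and add a reaction $\phi\rightleftharpoons NX$ whose backward rate is tuned to cancel the accumulated $x^N$ terms. Alternatively, first multiply $\mathcal{Q}$ by $(k''-x)$ to raise the degree, so that the top coefficient becomes negative with enough magnitude to absorb all the negative side terms. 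Any leftover positive constant term can be corrected with $\phi\rightleftharpoons X$. Negative lower-order coefficients can be produced by reactions $iX\rightleftharpoons \ell X$ with $\ell<i$, whose reverse terms $x^\ell$ must in turn be balanced.

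\textbf{Main obstacle.} The hardest step is this coefficient matching. Every reversible reaction injects a paired positive term and negative term. One must therefore show that the resulting linear system in the rate constants has a solution with all rate constants positive and rational.

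The approach is to organize the reactions as a triangular scheme. Each positive coefficient of $\mathcal{Q}$ is generated by a reaction $iX\rightleftharpoons NX$. The top coefficient of $x^N$ is treated as a free slack variable, made sufficiently negative by a factor $(k''-x)$ with $k''$ rational and large. Multiplying by such a factor preserves both the root $\alpha$ and its stability, by Proposition~\ref{prop:stability_invariance}.

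Since every rate constant is then a rational linear expression in the rational coefficients $c_i$, the rational constraint of Definition~\ref{def:revcrn} holds. This places $\alpha\in\mathbb{R}^{1s}_{RevCRN}$.
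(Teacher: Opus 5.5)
\textbf{Gap: the sign of the constant term.} Nothing in your Steps 1--2 controls the sign of the constant term of $\mathcal{Q}$. After you normalize the minimal polynomial so that $\mathcal{P}'(\alpha)<0$, we have $\mathcal{P}>0$ just left of $\alpha$. Each simple root in $(0,\alpha)$ flips the sign, so $\mathcal{P}(0)<0$ exactly when an odd number of conjugates of $\alpha$ lie in $(0,\alpha)$.

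For example, $\alpha=(3+\sqrt5)/2$ gives $\mathcal{P}(x)=-x^2+3x-1$, with $\mathcal{P}(0)=-1$. Your factors $(k-x)$ and $(k'-x)$ with $k,k'>0$ are positive at $x=0$, so they cannot change this sign, and $\mathcal{Q}(0)<0$. No mass-action network has such a rate law: at $x=0$ only source reactions $\phi\to nX$ fire, so $\dot x|_{x=0}\ge 0$.

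Your Step 3 tacitly assumes $c_0\ge 0$ (\emph{any leftover positive constant term}). The paper's Algorithm~\ref{alg:poly_conversion} likewise only accepts inputs with $a_0\ge 0$. The paper gets this for free by starting from the polynomial of Lemma 5.1 of \cite{Huang2019}, which is already a one-species CRN rate law. You start from the bare minimal polynomial, so you must handle it yourself.

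The fix is short. If $\mathcal{Q}(0)<0$, replace $\mathcal{Q}$ by $x\,\mathcal{Q}(x)$.
\begin{itemize}
\item The constant term becomes $0$.
\item The leading sign is unchanged.
\item The derivative at $\alpha$ is multiplied by $\alpha>0$, so $\alpha$ stays stable.
\item The new equilibrium at $0$ lies outside your interval $(\gamma,\delta)$; here $\gamma>0$ necessarily. So your choice of a rational $x(0)$ in $(\gamma,\delta)$ still converges to $\alpha$.
\end{itemize}

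\textbf{Errors in Step 3.} Two of your options there do not work as written.
\begin{itemize}
\item With $N>\deg\mathcal{Q}$, a reaction $\phi\rightleftharpoons NX$ contributes only $N(q-p\,x^N)$. Its $x^N$ term is negative, so it cannot cancel the negative $x^N$ terms coming from the reactions $iX\rightleftharpoons NX$.
\item Multiplying a $\mathcal{Q}$ that already has negative leading coefficient by $(k''-x)$ makes the leading coefficient positive.
\end{itemize}
Neither device is needed, because the backward rate constants are free. Take $N=\deg\mathcal{Q}$, so $c_N<0$, and realize $\mathcal{Q}$ as follows:
\begin{itemize}
\item For each $c_i>0$ with $i\ge 1$, use $iX\rightleftharpoons NX$ with forward rate $c_i/(N-i)$ and a rational backward rate small enough that the induced $x^N$ terms total less than $|c_N|$.
\item Supply the remainder of $c_N$ by a one-directional $NX\to\phi$.
\item Realize each remaining negative $c_i$ ($1\le i<N$) by a one-directional $iX\to\phi$ with rate $|c_i|/i$.
\item Realize $c_0>0$ by $\phi\to X$.
\end{itemize}
All of these reactions are allowed in semi-open RevCRNs. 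With the constant-term fix, this gives a valid variant of the paper's argument. It differs in pairing every positive term with the top-degree term, rather than with its neighbour as Algorithm~\ref{alg:poly_conversion} does.
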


\begin{proof}
  Let \(\alpha \in \text{ALG}\) and \(\alpha > 0\). Based on the statement and proof of Lemma 5.1 in \cite{Huang2019}, we know that there exists a single-variable polynomial \(\mathcal{P}(x)\) with integral coefficients that describes the rate equation of a 1-species CRN, given by \(\frac{dx}{dt} = \mathcal{P}(x)\). In this context, \(\alpha\) is not only a root of \(\mathcal{P}(x)\), but it is also a stable root of the rate equation. Without loss of generality, we will consider rational coefficients in our case to align with the construction of RevCRN.

    The rate equation of a general 1-species CRN that computes algebraic numbers through a single-variable polynomial does not typically adhere to the form specified in equation~\ref{eq:rev_rate_general_nspecies}. However, it is possible to transform any such polynomial into one that obeys the form outlined in equation \ref{eq:rev_rate_general_nspecies}, while preserving the original polynomial's roots. To achieve this, we present Algorithm \ref{alg:poly_conversion} below, which is a crucial tool for this conversion.

    The algorithm transforms a single-variable polynomial of degree \( n \in \mathbb{N} \) with a negative leading coefficient and a positive constant term into a form suitable for the rate equation of a single species RevCRN, represented as \( \mathcal{P}_{RevCRN}(x) \) in equation \ref{eq:rev_rate_general_nspecies}.

Starting with an initial polynomial \( \mathcal{P}(x) = \sum_{i=0}^n a_i x^i \) (where \( a_0 \geq 0 \) and \( a_n < 0 \)), the algorithm organizes the terms in decreasing order of their exponents. It checks for alternating positive and negative coefficients, beginning with the negative leading term. For a 1-species RevCRN, polynomials with \( n \) terms (even \( n \)) can be expressed as \( \frac{n}{2} \) binomials of the form \( k_{i1}x^{a_i} - k_{i2}x^{b_i} \) under the condition that \( a_i < b_i \).

To convert \( \mathcal{P}(x) \) into \( \mathcal{P}_{RevCRN}(x) \), the algorithm performs two operations. First, it addresses any sign violations in intermediate terms by splitting them into positive and negative components, such that \( kx^a = (p - q)x^a \). This allows the terms to be rearranged for constructing the desired binomials until no violations remain.

An optional step involves closing exponent gaps that occur when terms like \( a_m x^m \) and \( a_n x^n \) have \( n > m + 1 \), indicating missing intermediate exponents. To resolve these gaps, dummy net-zero terms (e.g., \( x^k - x^k \) where \( m < k < n \)) can be introduced to create valid binomial terms.

    \begin{algorithm}[htb!]
        \caption{Polynomial Transformation}
        \label{alg:poly_conversion}
        \begin{algorithmic}[1]
            \REQUIRE $P(x) = \sum_{i=0}^{n} a_i x^{d_i}$ with $a_i \in \mathbb{Q}$
            \ENSURE Transformed polynomial with alternating signs (and optionally no exponent gaps)
            
            \STATE Arrange the terms of $P(x)$ in decreasing order of exponents, i.e., the new arrangement of the terms should follow $\forall 0< i< n, d_{i-1}>d_i>d_{i+1}$

            \LINECOMMENT{\textbf{Ensure alternating coefficient signs}}
            \FOR{each pair of consecutive non-constant terms $A = a x^{\alpha}$, $B = b x^{\beta}$ with $\alpha > \beta$}
            
                \IF[No Violation: Required alternating signs found]{$ab < 0$}
                    \STATE Continue to the next pair
                \ELSIF[Consecutive terms have negative coefficients]{$a < 0$ and $b < 0$}
                    \STATE Choose $p, q \in \mathbb{Q}$ such that $b = p - q$\COMMENT{Break term}
                    \STATE Replace $B$ by $p x^{\beta} - q x^{\beta}$\COMMENT{Reorganize as required}
                    \STATE Pair $p x^{\beta}$ with $A$, and continue from $-q x^{\beta}$
                \ELSIF[Consecutive terms have positive coefficients]{$a > 0$ and $b > 0$}
                    \STATE Choose $p, q \in \mathbb{Q}$ such that $b = p - q$\COMMENT{Break term}
                    \STATE Replace $B$ by $-q x^{\beta} + p x^{\beta}$\COMMENT{Reorganize as required}
                    \STATE Pair $-q x^{\beta}$ with $A$, and continue from $p x^{\beta}$
                \ENDIF
            \ENDFOR
            \STATE Repeat the above process until adjacent terms alternate in sign
            
            \LINECOMMENT{\textbf{Optional: Eliminate exponent gaps}}
            
            \WHILE{there exist adjacent terms $A = -p x^{\alpha}$ and $B = q x^{\beta}$ with $\alpha - \beta > 1$}
            
                \STATE Choose $c$ such that $\alpha > c > \beta$ (e.g., $c = \alpha - 1$)\COMMENT{Exponent term selection}
                \STATE Choose $k \in \mathbb{Q}$ \COMMENT{Valid coefficient selection}
                \STATE Insert the canceling pair $k x^c - k x^c$ between $A$ and $B$ \COMMENT{Net-zero term insertion}
                \STATE Pair $k x^c$ with $A$ and $-k x^c$ with $B$ \COMMENT{Rearrange terms as required}
            
            \ENDWHILE
            
            \RETURN resulting polynomial $\mathcal{Q}(x)$
        
        \end{algorithmic}
    \end{algorithm}
        
    Considering Algorithm \ref{alg:poly_conversion}, we distinguish two cases and handle them separately:
    \begin{enumerate}
        \item \textbf{Negative leading term: }  
        If the highest-degree term of the polynomial carries a negative coefficient, then one may directly apply Algorithm~\ref{alg:poly_conversion} to obtain a representation consistent with Equation~\ref{eq:rev_rate_general_nspecies}.
        
        \item \textbf{Positive leading term: }  
        If the highest-degree term carries a positive coefficient, consider the transformed polynomial
        \[\mathcal{Q}(x) = \mathcal{P}(x)(k - x),\]
        where $k \in \mathbb{Q}$ is chosen such that $k > \alpha$, and $\alpha \in \mathrm{root}(\mathcal{P})$ denotes the stable root of $\mathcal{P}$ under consideration.
        
        By Proposition~\ref{prop:stability_invariance}, this transformation preserves the stability properties of the root $\alpha$ without loss of generality. One can then apply Algorithm~\ref{alg:poly_conversion} to $\mathcal{Q}(x)$ to obtain a representation in the desired form of Equation~\ref{eq:rev_rate_general_nspecies}.
    \end{enumerate}
\end{proof}

\begin{note}
     The final polynomial generated by Algorithm \ref{alg:poly_conversion} represents the rate equation of a single species RevCRN. Note that the algorithm can generate isolated negative terms that are not paired with any positive terms. Recall that such systems are feasible under our definition of RevCRN, which represent semi-open systems.
\end{note}

\begin{example}
    Consider the following polynomial
    \[\mathcal{P}(x)=1-5x^3+x^4\]
    The polynomial represents the mass-action kinetics of the following CRN
    \[4X\xrightarrow{1}5X\qquad \phi\xrightleftharpoons[5/3]{1/3}3X\]
    The highest exponent term of $\mathcal{P}(x)$ has a positive coefficient. This represents the second case discussed above. As a result, we first transform $\mathcal{P}(x)$ into $\mathcal{Q}(x)$ considering $k=3$ for simplicity as follows:
    \[\mathcal{Q}(x)=\mathcal{P}(x)(3-x)=3-x-15x^3+8x^4-x^5\]
    We then apply Algorithm \ref{alg:poly_conversion} on $\mathcal{Q}(x)$ as follows:
    \begin{enumerate}
        \item \textbf{Sorting:}
        
        We first rearrange the terms of $\mathcal{Q}(x)$ in decreasing order of exponents as follows:
        \[\mathcal{Q}(x)=-x^5+8x^4-15x^3-x+3\]

        \item \textbf{Ensure alternating coefficient sign}

        Note that the alternating sign condition in the sequence is violated between the terms \(-15x^3\) and \(-x\). Since both terms carry negative coefficients satisfying the condition in line 5 of Algorithm \ref{alg:poly_conversion}, we express \(-x\) in the form of \(-x = (p - q)x\), where \(p - q = -1\). By setting \(p = 1\) and \(q = 2\), we obtain \(-x = x - 2x\). Next, we pair \(x\) with \(-15x^3\) and \(-2x\) with \(3\). Consequently, the polynomial can be represented as:
        \[\mathcal{Q}(x) = -x^5 + 8x^4 - 15x^3 + x - 2x + 3\]
        At this stage, \(\mathcal{Q}(x)\) is free from alternating sign violations and satisfies the desired form.

        \item \textbf{Eliminate exponent gaps}

        There exists a gap between the exponent terms $-15x^3+x$. So we insert the terms $kx^2-kx^2$ between them and choose $k=1$. Then the resulting polynomial generated is
        \[\mathcal{Q}(x)=-x^5+8x^4-15x^3+x^2-x^2+x-2x+3\]
    \end{enumerate}

    The polynomial $\mathcal{Q}(x)$ now represents the mass-action kinetics of the following RevCRN.
    \begin{equation*}
        \begin{split}
            4X\xrightleftharpoons[1]{8}5X \qquad 2X\xrightleftharpoons[15]{1}3X \qquad 2X\xrightleftharpoons[15]{1}3X \qquad X\xrightleftharpoons[1]{1}2X \qquad \phi \xrightleftharpoons[2]{3}X
        \end{split}
    \end{equation*}
\end{example}

\begin{corollary}
    $\mathbb{R}_{LCRN}=ALG=\mathbb{R}^{1s}_{RevCRN}$
    \label{cor:alg_equivalence}
\end{corollary}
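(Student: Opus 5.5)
The corollary consists of three inclusions. The first equality, $\mathbb{R}_{LCRN}=ALG$, is the cited result of \cite{Fletcher2025_ALG_LCRN}, and I take it as given. Lemma~\ref{lem:alg} already supplies $ALG\subseteq\mathbb{R}^{1s}_{RevCRN}$. So the only new work is the reverse inclusion $\mathbb{R}^{1s}_{RevCRN}\subseteq ALG$, after which the chain $\mathbb{R}_{LCRN}=ALG=\mathbb{R}^{1s}_{RevCRN}$ follows.

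For the reverse inclusion, let $\alpha\in\mathbb{R}^{1s}_{RevCRN}$ be computed by a one-species RevCRN $\mathcal{N}=(\{X\},R)$, in the open, semi-open or closed variant. By equation~\eqref{eq:rev_rate_general_nspecies} specialised to one species, the rate equation has the form
\[\frac{dx}{dt}=\mathcal{P}(x)=\sum_{i}(b_i-a_i)\bigl(k_{i1}x^{a_i}-k_{i2}x^{b_i}\bigr),\]
possibly with additional unpaired source or sink monomials in the semi-open case. The rational constraint of Definition~\ref{def:revcrn} makes every $k_{i1},k_{i2}$ rational and every $a_i,b_i$ natural, so $\mathcal{P}\in\mathbb{Q}[x]$. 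By Definition~\ref{def:revcrn_variants}, $\alpha=x^*$ is a fixed point of this autonomous scalar ODE, hence $\mathcal{P}(\alpha)=0$.

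Two cases then remain. If $\mathcal{P}$ is not identically zero, then $\alpha$ is a root of a nonzero rational polynomial. It is therefore a root of one of its irreducible factors over $\mathbb{Q}$, so $\alpha$ is algebraic in the sense of Definition~\ref{def:alg}; being nonnegative, it lies in $ALG$ (or is $0$). If instead $\mathcal{P}\equiv 0$, then every point is fixed and $x(t)=x(0)$ for all $t$. The convergence condition forces $\alpha=x(0)\in\mathbb{Q}_{\geq0}$, which is again algebraic.

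I expect the main obstacle to be the boundary value $\alpha=0$, since $ALG$ is defined as positive algebraic numbers. I would handle it by the convention implicit in the paper that the classes are compared on $\mathbb{R}_{>0}$, or equivalently by noting that the same boundary convention is used for $\mathbb{R}_{LCRN}$ in \cite{Fletcher2025_ALG_LCRN}. A secondary point is that Lemma~\ref{lem:alg} must also deliver the boundedness and convergence requirements of Definition~\ref{def:revcrn_variants}, and not only the polynomial shape. I would justify these by Lemma~\ref{lem:bound} and Corollary~\ref{lem:bound11} for boundedness, and by choosing a rational initial value inside the basin of the stable root. Such a value exists because in one dimension the basin of a stable root is an open interval containing $\alpha$, and that interval contains rationals.
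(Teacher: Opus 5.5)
Your proposal is correct and follows the paper's own argument. You take $\mathbb{R}_{LCRN}=ALG$ from Fletcher et al., use Lemma~\ref{lem:alg} for $ALG\subseteq\mathbb{R}^{1s}_{RevCRN}$, and obtain the reverse inclusion because the output of a one-species RevCRN is a root of its rational-coefficient rate polynomial. Your additions make explicit what the paper's one-line proof leaves implicit: the degenerate case $\mathcal{P}\equiv 0$, the basin and boundedness check behind Lemma~\ref{lem:alg}, and the value $0$, which genuinely needs a convention because it lies in $\mathbb{R}^{1s}_{RevCRN}$ via a sink reaction $X\rightarrow\phi$ but is excluded from the paper's positive $ALG$.
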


\begin{proof}
    We know that \( ALG = \mathbb{R}_{LCRN} \) as established in \cite{Fletcher2025_ALG_LCRN}. From Lemma \ref{lem:alg}, we can conclude that \( ALG \subseteq \mathbb{R}^{1s}_{RevCRN} \). Additionally, it is straightforward to demonstrate that \( \mathbb{R}^{1s}_{RevCRN} \subseteq ALG \) because the rate equations of all 1-species RevCRNs are represented by single-variable polynomials with rational coefficients. Therefore, the corollary follows directly from these findings.
\end{proof}

\begin{proposition}
    If a RevCRN converges to a positive stable state, then its components are in $ALG$.
    \label{prop:pos_steady_state_ALG}
\end{proposition}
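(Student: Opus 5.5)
The plan is to turn the equilibrium into a solution of a polynomial system with rational coefficients. Then I would show it is an \emph{isolated} real solution of that system, and conclude by a transfer argument from real algebraic geometry that isolated real solutions of $\mathbb{Q}$-polynomial systems have algebraic coordinates.

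\textbf{Step 1 (equations for $x^*$).} Let $\mathcal{N}=(S,R)$ be the RevCRN and $x^*\in\mathbb{R}^{n}_{>0}$ the stable state it converges to from the rational initial vector $x(0)\in\mathbb{Q}^n_{\geq 0}$. By \eqref{eq:rev_rate_general_nspecies} every component of the vector field is a polynomial $p_m(x)$ with rational coefficients, since the rate constants are rational and the stoichiometric coefficients are natural numbers. Hence $p_m(x^*)=0$ for all $m$. The system may be closed, or partially closed, so that the Jacobian is singular along conserved directions. To handle this, I would add the linear conservation laws. Take a basis $w_1,\dots,w_r$ of the left kernel of the stoichiometric matrix; because that matrix is an integer matrix, the $w_\ell$ can be chosen in $\mathbb{Q}^n$. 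Along the trajectory $w_\ell\cdot x(t)=w_\ell\cdot x(0)\in\mathbb{Q}$, and by continuity and convergence the same holds at $x^*$. So $x^*$ lies in the real algebraic set
\[
V=\{x\in\mathbb{R}^n : p_m(x)=0\ \forall m,\ \ w_\ell\cdot x=w_\ell\cdot x(0)\ \forall \ell\},
\]
which is defined by polynomials over $\mathbb{Q}$. In other words, $V$ is the set of equilibria inside the stoichiometric compatibility class of $x(0)$.

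\textbf{Step 2 (isolation; the main obstacle).} I would show that $x^*$ is an isolated point of $V$. The intended argument uses the stability hypothesis of Definition~\ref{def:revcrn_variants}, read as asymptotic stability relative to the compatibility class, i.e.\ the local attractivity that the convergence clause presupposes. Suppose $x^*$ were not isolated in $V$. Then there would be equilibria $y_k\to x^*$, $y_k\neq x^*$, all lying in the same compatibility class. Each $y_k$ is stationary, so trajectories started at $y_k$ do not converge to $x^*$. This contradicts attractivity in a neighbourhood of $x^*$ within the class. I expect this to be the delicate point. If ``stable'' is read only as Lyapunov stability, a continuum of equilibria through $x^*$ is not excluded a priori. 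In that case I would need an additional argument, for instance that a positive-dimensional connected component of $V$ through $x^*$ is incompatible with convergence from the given rational data. Otherwise I would state the proposition under the asymptotic (or exponential, as in Definition~\ref{def:lcrn}) stability assumption. Positivity of $x^*$ ensures that $x^*$ lies in the interior of the orthant, so the neighbourhood argument is not disturbed by the boundary.

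\textbf{Step 3 (algebraicity of isolated points).} The set $I$ of isolated points of $V$ is semialgebraic and first-order definable over $\mathbb{Q}$ in the language of ordered fields. It is also finite, because a real algebraic set has finitely many connected components. By the Tarski--Seidenberg transfer principle, the statement ``$I$ has exactly $N$ points'' holds in the real closure $\mathbb{R}_{alg}$ of $\mathbb{Q}$ (the real algebraic numbers) as well. The $N$ points found there are also isolated points of $V$ in $\mathbb{R}$, so they exhaust $I$. Hence every coordinate of every point of $I$, in particular of $x^*$, is a real algebraic number. Each coordinate is positive, so it belongs to $ALG$.

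An alternative to Step 3 is elimination. Compute a lexicographic Gröbner basis, or iterated resultants, of the ideal generated by the $p_m$ and the conservation equations, localized at the isolated point. This yields for each coordinate $x_i$ a nonzero univariate polynomial over $\mathbb{Q}$ vanishing at $x_i^*$. I would use the transfer argument as the primary route because it handles real isolated points of a possibly higher-dimensional complex variety without extra case analysis.
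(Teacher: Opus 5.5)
Your argument is correct under the hypothesis you make explicit (local attractivity of $x^*$ within its class). It has the same skeleton as the paper's proof: show $x^*$ is isolated, then use algebraicity of isolated real solutions of a $\mathbb{Q}$-polynomial system. You fill in both halves differently.

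\textbf{What the paper does.} It gets algebraicity by citing Theorem~3.3 of Fletcher et al.\ for isolated fixed points of a CRN. It gets isolation by removing only catalytic species, absorbing their constant rational concentrations into the rate constants. It then asserts, via the condition $\|\mathcal{F}'(x^*)\|_\infty<0$, a full-dimensional basin $\|x-x^*\|_\infty<\delta$ in the reduced system.

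\textbf{Your conservation-law reduction.} You quotient by \emph{all} linear conservation laws, using a rational basis of the left kernel of the integer stoichiometric matrix and the rational conserved values $w_\ell\cdot x(0)$. This is strictly more general, and it is actually needed. The closed network $A\rightleftharpoons B$ has no catalysts, yet its positive equilibria form the line $k_1a=k_2b$. So the paper's reduction leaves $x^*$ non-isolated and its basin claim fails, while your $V$ cuts that line down to a single point.

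\textbf{Your algebraicity step.} After the reduction, $x^*$ is isolated only in $V$ (equilibria plus linear constraints), not among all fixed points of the ODE. The cited fixed-point theorem therefore does not apply verbatim, and your Tarski--Seidenberg argument supplies exactly the statement needed. It also covers real-isolated points lying on positive-dimensional complex components, which is where your elimination alternative would need extra care.

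\textbf{Stability hypothesis.} Your Step~2 is the same contradiction the paper uses implicitly. You are right that it needs attractivity rather than mere Lyapunov stability; the paper's proof assumes attractivity as well.

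\textbf{One refinement.} In a semi-open network the kinetic subspace
$K=\mathrm{span}\{\sum_{y'}k_{y\to y'}(y'-y) : y \text{ a source complex}\}$
can be strictly smaller than the stoichiometric subspace. In that case attractivity relative to the full stoichiometric class is impossible. Since $K$ is spanned by rational vectors, state attractivity relative to $x(0)+K$ and take the $w_\ell$ from $K^\perp$; your argument then goes through unchanged.
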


\begin{proof}
    From Theorem 3.3 of \cite{Fletcher2025_ALG_LCRN}, we know that any isolated fixed points of a CRN have all its components in $ALG$. It suffices to show that the positive stable states of RevCRN are isolated fixed points.
    
    While analyzing the positive steady states $x^*=[x_1^*,x_2^*,\dots, x_n^*]$ of a RevCRN, these equilibria are not always component-wise isolated in the full concentration space $\mathbb{R}_{>0}^n$. For instance, networks containing catalytic reactions in which the catalyst concentrations remain strictly invariant possess linear conservation laws, thereby inducing a continuum of equilibria. To analyze this structure, the full state space can be decomposed into two distinct orthogonal subspaces: the stoichiometric subspace, which represents the evolving species, and its orthogonal complement, which represents the invariant catalytic species. Consequently, while the steady states form a continuous manifold along the catalytic invariant dimensions, they typically reduce to isolated fixed points when restricted to any single stoichiometric compatibility class induced by the stoichiometric subspace of the evolving species.

    According to our construction, we only allow positive rational initial concentrations. As a result, the effect of a rational constant induced by a catalytic species over the reactions can be encoded within the rate equation. For e.g., consider the general $n-$species RevCRN following mass-action kinetics:
    \[a_1X_1+\dots+a_nX_n+Y\xrightleftharpoons[k_2]{k_1}b_1X_1+\dots+b_nX_n+Y,\quad a_i\neq b_i, 1\leq i\leq n\]
    The rate equation of the reaction is
    \[\frac{dy}{dt}=0,\qquad \frac{dx_i}{dt}=k_1y\prod_{i=1}^n x_i-k_2y\prod_{i=1}^nx_i^2,\quad 1\leq i\leq n\]
    As per our construction, we initialize $y(0)=q\in\mathbb{Q}$ and we know $y(t)=q,\ \forall t\geq 0$. This results in the rate equation of $x_i$ to be
    \[\frac{dx_i}{dt}=k_1q\prod_{i=1}^nx_i-k_2q\prod_{i=1}^nx_i^2=\overline{k_1}\prod_{i=1}^nx_i-\overline{k_2}\prod_{i=1}^nx_i^2\ where\ \overline{k_1}=k_1q,\overline{k_2}=k_2q\]
    The RevCRN can be thus converted to $a_1X_1+\dots+a_nX_n\xrightleftharpoons[\overline{k_2}]{\overline{k_1}}b_1X_1+\dots+b_nX_n$
    To prevent this reaction from occurring when \(y(t) = 0\) for all \(t \geq 0\), one can simply set \(q = 0\). This modification results in \(\overline{k_1} = \overline{k_2} = 0\), thereby facilitating the desired reaction behavior.

    As a result, we can reduce our RevCRN to a non-degenerate stoichiometric subspace of evolving species. Lastly, it remains to show that the fixed point in the subspace induced by the update is isolated. Recall that we are considering RevCRNs represented by a system of ODEs $\frac{dx}{dt}=\mathcal{F}(x)$ (of form equation \ref{eq:rev_rate_general_nspecies}), where $x=[x_1,x_2,\dots, x_n]$ that are bounded and converge to positive real numbers (including 0). In particular, for stable states in such reduced systems discussed above, we have $||\mathcal{F}'(x^*)||_\infty<0$. Furthermore, since $\mathcal{F}(x)$ is continuous, $\exists\delta>0$ such that whenever $||x^*-x||_\infty<\delta$, $\lim_{t\rightarrow\infty} x(t)= x^*$. This shows that the point $x^*$ of the reduced system is an isolated fixed point.
\end{proof}

    Let us consider any two complexes $ y$ and $ y'$ in a fully reversible reaction network $\mathcal{N}=(S, R)$ following mass-action kinetics.
    \[y\xrightleftharpoons[k_{y'\rightarrow y}]{k_{y\rightarrow y'}}y'\]
    
    In general, the complexes of an $n-$species reaction network are of the form $y=a_1X_1+\dots+a_nX_n$ and $y'=b_1X_1+\dots+b_nX_n$. The mass action kinetics of the reaction $y\xrightarrow{k_{y\rightarrow y'}}y'$ is defined as 
    \[\mathcal{K}_{y\rightarrow y'}(x)=k_{y\rightarrow y'}\prod_{i=1}^nx_i^{a_i},\ where\ x=[x_1,\dots,x_n]\]
    Similarly, for the reaction $y'\xrightarrow{k_{y'\rightarrow y}}y$, we have $\mathcal{K}_{y'\rightarrow y}(x)=k_{y'\rightarrow y}\prod_{i=1}^nx_i^{b_i}$. 
    
    Now we define detailed balancing and detailed balanced RevCRN computable reals as follows.

\begin{definition}[Detailed balancing \cite{Bridgman_detailed_balancing, Dirac_detailed_balancing, feinberg2019foundations, Fowler1925-pd_detailed_balancing, Wegscheider1901_detailed_balancing}]
\label{def:detailed_balancing}
    Consider a fully reversible reaction network $\mathcal{N}=(S,R)$ following mass action kinetics. \textbf{Detailed balancing} is said to be obtained at concentration $x^*\in\mathbb{R}_{>0}^n$, if the following condition is satisfied:
    \[\mathcal{K}_{y\rightarrow y'}(x^*)=\mathcal{K}_{y'\rightarrow y}(x^*),\qquad \forall y\rightarrow y'\in R\]
\end{definition}

\begin{definition}[Detailed Balanced RevCRN Computable Reals] 
    Let $\alpha \in \mathbb{R}$. The real number $\alpha$ is said to be a \textbf{detailed balanced RevCRN computable real} if there exists a network $\mathcal{N}=(S,R)$ satisfying the conditions of Definition~\ref{def:revcrn_variants} such that $\mathcal{N}$ admits a detailed balanced equilibrium and the designated output species concentration satisfies $\lim_{t \rightarrow \infty} x_i(t) = |\alpha|$. The set of all real numbers computable by detailed balanced RevCRNs is denoted by $\mathbb{R}_{\mathrm{RevCRN}}^{DetBal}$.
    \label{def:detailed_balanced_revcrn}
\end{definition}

Considering Definition \ref{def:detailed_balancing}, and \ref{def:detailed_balanced_revcrn}, we present the following corollary.
    
\begin{corollary}
    $\mathbb{R}_{RevCRN}^{DetBal} \subseteq ALG$.
\end{corollary}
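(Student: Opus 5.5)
The plan is to show that every coordinate of a detailed balanced equilibrium is algebraic, because detailed balancing turns the steady-state condition into a system of binomial equations with rational coefficients. Take $\alpha\in\mathbb{R}^{DetBal}_{RevCRN}$, computed by $\mathcal{N}=(S,R)$ with output species $X_i$ and limit point $x^*$ with $x^*_i=\alpha$. If $\alpha=0$ there is nothing to prove beyond the paper's convention, so assume $\alpha>0$. By Definition~\ref{def:detailed_balancing}, every reversible pair $y\rightleftharpoons y'$ in $R$ satisfies
\[k_{y\rightarrow y'}\prod_{j} (x_j^*)^{a_j}=k_{y'\rightarrow y}\prod_j (x_j^*)^{b_j}.\]
All rate constants here are positive rationals (Definition~\ref{def:revcrn}), so each such relation is a binomial equation $\prod_j (x_j^*)^{b_j-a_j}=k_{y\rightarrow y'}/k_{y'\rightarrow y}\in\mathbb{Q}_{>0}$.

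\textbf{Step 1 (linearize).} Taking logarithms on the positive orthant, $u_j=\log x^*_j$ must satisfy a linear system $\Gamma^{T}u=c$. Here $\Gamma$ is the integer stoichiometric matrix and $c$ is the vector of logarithms of positive rationals. The solution set is an affine translate of $\ker\Gamma^T$, which is the orthogonal complement of the stoichiometric subspace.

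\textbf{Step 2 (pin down the point).} I would add the conservation laws that cut out the stoichiometric compatibility class of the rational initial vector $x(0)\in\mathbb{Q}^{|S|}_{\ge 0}$, namely $w\cdot x^*=w\cdot x(0)$ for a rational basis $w$ of $\ker\Gamma^T$. The basis can be chosen rational because $\Gamma$ is an integer matrix. By the classical Horn--Jackson / Feinberg result, there is exactly one positive detailed balanced equilibrium in each positive compatibility class, and it is an isolated solution. So $x^*$ is an isolated real solution of a polynomial system with rational coefficients: the binomials, cleared of negative exponents, together with the linear conservation equations. Catalytic species are handled as in the proof of Proposition~\ref{prop:pos_steady_state_ALG}, by absorbing their constant rational values into the rate constants.

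\textbf{Step 3 (conclude).} Any isolated solution of a polynomial system over $\mathbb{Q}$ has algebraic coordinates. One can invoke Theorem 3.3 of \cite{Fletcher2025_ALG_LCRN} exactly as the proof of Proposition~\ref{prop:pos_steady_state_ALG} does, or argue by elimination that the solution variety contains a zero-dimensional component defined over $\mathbb{Q}$. Alternatively, Proposition~\ref{prop:pos_steady_state_ALG} applies directly, since $x^*$ is a positive stable state. In either case $\alpha=x^*_i\in ALG$.

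\textbf{Main obstacle.} The delicate part is Step 2: verifying that $x^*$ really is isolated. Two things could interfere. First, the definition only requires convergence from some rational initial vector, so boundary components with $x^*_j=0$ must be handled; a mixed positive/zero face can be treated by restricting to the positive coordinates and repeating the argument. Second, the conservation laws must be used to remove the continuum of equilibria along $\ker\Gamma^T$. I would first cite the uniqueness of detailed balanced equilibria per compatibility class. Failing that, I would directly check that the Jacobian of the combined system (binomials plus conservation laws) is nonsingular at $x^*$, using the standard fact that the log-linear parametrization meets each affine stoichiometric class transversally.
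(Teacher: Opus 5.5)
Your argument is essentially correct, but it follows a different route from the paper's. The paper's proof is a one-line reduction: a detailed balanced limit is a positive steady state, so Proposition~\ref{prop:pos_steady_state_ALG} applies. That proposition gets the isolation needed for Theorem 3.3 of \cite{Fletcher2025_ALG_LCRN} from stability, after absorbing catalysts into rate constants. Detailed balance itself plays no real role there.

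You use detailed balance essentially. The equilibrium conditions become binomials $x^{y'-y}=k_{y\rightarrow y'}/k_{y'\rightarrow y}\in\mathbb{Q}_{>0}$. Birch/Horn--Jackson uniqueness, together with the rational conservation laws $w\cdot x=w\cdot x(0)$, then pins $x^*$ down as a nonsingular solution of a square rational system: $d=\dim S$ binomials taken from a basis of reaction vectors, plus $n-d$ conservation laws. The Jacobian is invertible precisely because $\mathrm{diag}(x^*)S^\perp\cap S=\{0\}$. This route gives a genuine proof of isolation that covers arbitrary conservation laws, not only catalytic ones. It uses convergence but not stability. It also shows exactly where rationality of $x(0)$ enters: without it, $X\rightleftharpoons Y$ with equal rates has every $(c,c)$ as a detailed balanced equilibrium. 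The paper's route, granting its proposition, is more general, since it uses nothing about detailed balance beyond positivity.

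Three points need tightening.

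\textbf{The limit is not automatically detailed balanced.} Definition~\ref{def:detailed_balanced_revcrn} only says $\mathcal{N}$ \emph{admits} some detailed balanced equilibrium $\bar x$. To write the binomials at $x^*$, you need the standard fact that every positive equilibrium is then detailed balanced: complex balancing gives $\log x^*-\log\bar x\in S^\perp$.

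\textbf{Do not cite Theorem 3.3 of \cite{Fletcher2025_ALG_LCRN} verbatim.} With conservation laws present, $x^*$ is generally not an isolated zero of the vector field, only of your augmented system. Conclude from your nonsingular square system, or from a real-closure argument, instead.

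\textbf{The boundary fix is incomplete.} Restricting to the support works when the trajectory lives on a face $F$ from the start. In that case only reactions of the subnetwork $\mathcal{N}_F$ ever fire, $\mathcal{N}_F$ is detailed balanced via $\bar x|_F$, and the relevant class $x(0)+S_F$ is rational. But suppose a trajectory in the relative interior of $F$ converges to a point of strictly smaller support. Then the restricted limit need not lie in any rational compatibility class of the smaller subnetwork, and ``repeating the argument'' fails. Ruling this case out is exactly the Global Attractor Conjecture, which is known only in special cases. The paper's proof silently assumes the limit is positive, so this is a limitation you share with it rather than a defect relative to it.
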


\begin{proof}
   According to the Definition \ref{def:detailed_balancing}, a detailed balanced system is one in which positive steady states are established when the rates of the forward and reverse reactions are equal \cite{Bridgman_detailed_balancing, Dirac_detailed_balancing, feinberg2019foundations, Fowler1925-pd_detailed_balancing, Wegscheider1901_detailed_balancing}. As a result, the corollary follows directly from Proposition \ref{prop:pos_steady_state_ALG}.
\end{proof}

\section{$\mathbf{\mathbb{R}_{RevCRN}}$ Hierarchy}
\label{sec:hierarchy}

From Corollary \ref{cor:alg_equivalence}, we know that all the real numbers computed by the 1-species RevCRN are algebraic. However, we observe a critical result when studying the 2-species semi-open RevCRN. This leads to the following proposition, which serves as the motivation of this section.

\begin{proposition}
    $\mathbb{R}_{RevCRN}\neq ALG$
    \label{prop:ALG_RevCRN_nequiv}
\end{proposition}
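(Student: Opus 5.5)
The plan is to exhibit an explicit two-species semi-open RevCRN whose limiting output concentration is transcendental. Since $ALG$ contains only algebraic numbers, this gives $\mathbb{R}_{RevCRN}\not\subseteq ALG$. By Lemma~\ref{lem:alg} we already have $ALG\subseteq\mathbb{R}^{1s}_{RevCRN}\subseteq\mathbb{R}_{RevCRN}$, so the inequality of the two sets follows. The idea is to use a one-directional outflow, which only a semi-open system allows, as a ``clock'' species. This species integrates to a rational total amount of time, during which a logistic-type reversible reaction runs.

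\textbf{Construction.} The candidate network on species $\{X,Y\}$ is
\[
X\xrightarrow{1}\phi,\qquad X+Y\xrightleftharpoons[1]{1}X+2Y,
\]
initialised with $x(0)=1$ and $y(0)=\tfrac12$. The only unidirectional reaction involves the $0$-complex, and the reaction between nonempty complexes has both rate constants positive and rational. Hence this is a valid semi-open RevCRN in the sense of Definition~\ref{def:revcrn}. The mass-action equations are
\[
\frac{dx}{dt}=-x,\qquad \frac{dy}{dt}=x\,(y-y^2).
\]
They give $x(t)=e^{-t}$. Changing time to $\tau(t)=\int_0^t x(s)\,ds=1-e^{-t}$ turns the $y$-equation into the logistic equation $dy/d\tau=y(1-y)$. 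Its explicit solution is $y=\frac{y_0e^{\tau}}{1-y_0+y_0e^{\tau}}$. As $t\to\infty$ we have $\tau\to1$, so
\[
\lim_{t\to\infty}y(t)=\frac{e}{1+e}.
\]

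\textbf{Verifying the conditions of Definition~\ref{def:revcrn_variants}.}
\begin{enumerate}
\item \emph{Boundedness.} This is immediate: $0\le x\le 1$, and $y$ stays in $[0,1]$ because $y=0$ and $y=1$ are invariant for the logistic flow.
\item \emph{Convergence.} This follows from the computation above, with limit $x^*=(0,\tfrac{e}{1+e})$.
\item \emph{Stability.} For a perturbed initial state $(x_0',y_0')$, the same reparametrisation gives limit $\frac{y_0'e^{x_0'}}{1-y_0'+y_0'e^{x_0'}}$. This depends continuously on $(x_0',y_0')$, and for all $t$ the trajectory stays within a continuous function of the perturbation size of the corresponding unperturbed values. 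Hence $x^*$ is Lyapunov stable, although it is not asymptotically or exponentially stable, because every point $(0,y)$ is a fixed point.
\end{enumerate}
Note that $x^*$ is not a positive steady state, so there is no conflict with Proposition~\ref{prop:pos_steady_state_ALG}.

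\textbf{Transcendence.} If $\beta=\frac{e}{1+e}$ were algebraic, then $e=\frac{\beta}{1-\beta}$ would be algebraic, contradicting Hermite's theorem (or Lindemann's theorem). So $\beta\in\mathbb{R}^{2s}_{RevCRN}\setminus ALG$.

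\textbf{Main obstacle.} I expect the delicate step to be justifying that $x^*$ counts as a ``stable fixed point'' under Definition~\ref{def:revcrn_variants}, since it lies on a continuum of equilibria. I will handle this with the explicit Lyapunov (neutral) stability argument via the closed-form solution, stressing that the definition asks only for stability and not exponential stability. This is exactly where the construction escapes the LCRN framework and the result $ALG=\mathbb{R}_{LCRN}$.
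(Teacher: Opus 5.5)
Your proof is correct and is essentially the paper's argument. The paper uses the same semi-open network $X\to\phi$, $X+Y\rightleftharpoons X+2Y$, with $k_1=1,k_3=2,k_4=1$ and implicitly $y(0)=1$, obtaining $2e^2/(1+e^2)$ via Lindemann--Weierstrass for $e^2$; your parameters give $e/(1+e)$ and need only Hermite's theorem, and your time-change derivation and your check that $(0,\tfrac{e}{1+e})$ is (neutrally) stable supply details the paper's proof leaves implicit without changing the route.
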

\begin{proof}
From Corollary \ref{cor:alg_equivalence}, we know that $ALG=\mathbb{R}_{RevCRN}^{1s}\subseteq \mathbb{R}_{RevCRN}$. Now it suffices to show that there exists at least a 2-species CRN that computes a number not in $ALG$. Consider the following RevCRN.
\begin{equation}
    \begin{split}
        X\xrightleftharpoons[k_{2}]{k_{1}}\phi\qquad
        X+Y\xrightleftharpoons[k_{4}]{k_{3}}X+Y+Y
    \end{split}
\end{equation}

Its corresponding ODEs representing the mass-action kinetics are
\begin{equation}
    \begin{split}
        \frac{dx}{dt}=k_2-k_1x\qquad
        \frac{dy}{dt}=k_3xy-k_4xy^2
    \end{split}
\end{equation}

In compliance with the definition of a semi-open RevCRN system, we allow $k_2=0$. Then we get by solving the two ODEs
\begin{equation}
\begin{split}
    x(t)=e^{-k_1t}\qquad
    y(t)=\frac{k_3}{k_4\pm\exp{\Big(-\frac{k_3}{k_1}\big(1-\exp{(-k_1t)}\big)\Big)}}
\end{split}
\end{equation}

We can now notice that as $t\rightarrow\infty$ we get
\begin{equation}
    \begin{split}
        \lim_{t\rightarrow\infty}x(t)=0\qquad
        \lim_{t\rightarrow\infty}y(t)=\frac{k_3\exp{(\frac{k_3}{k_1})}}{k_3+k_4\big(\exp{(\frac{k_3}{k_1})}- 1\big)}=y_\infty
    \end{split}
\end{equation}

We can observe that $\exists k_1,k_3,k_4\in\mathbb{Q}_{>0}$ such that $y_\infty$ is a transcendental number.

It suffices to verify the claim with a single example. Choose $k_1=1, k_3=2, k_4=1$. In this case, we obtain
\[
y_\infty=\frac{2e^2}{1+e^2}.
\]
To prove that this value is transcendental, observe first that, by the Lindemann–Weierstrass theorem \cite{baker2022transcendental}, $e^2$ is transcendental. Rearranging the formula for $y_\infty$ yields
\[
e^2=\frac{y_\infty}{2-y_\infty}.
\]
If $y_\infty$ were algebraic, then applying rational operations with other algebraic numbers would again produce an algebraic number, which would force $e^2$ to be algebraic. This contradiction shows that $y_\infty$ must be transcendental.
\end{proof}

\begin{corollary}
    If a RevCRN with more than one species converges to a non-positive equilibrium, then its components may not belong to $ALG$.
\end{corollary}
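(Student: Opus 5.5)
The corollary is an existence statement ("may not belong"), so one explicit witness suffices. I would reuse the two-species semi-open network from the proof of Proposition~\ref{prop:ALG_RevCRN_nequiv}: \(X\xrightleftharpoons[k_2]{k_1}\phi\) with \(k_2=0\), together with \(X+Y\xrightleftharpoons[k_4]{k_3}X+Y+Y\). This network has more than one species. It converges to the equilibrium \((0,y_\infty)\), which is non-positive because its \(x\)-component is \(0\). For the choice \(k_1=1,k_3=2,k_4=1\), its \(y\)-component equals \(2e^2/(1+e^2)\), which was shown to be transcendental. So this equilibrium has a component outside \(ALG\), and the corollary follows.

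The proof would proceed in three steps.

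First, I would check that the network is a legitimate semi-open RevCRN in the sense of Definition~\ref{def:revcrn}. The rate constants are rational. The only one-directional reaction is the exchange \(X\to\phi\) with the \(0\)-complex. The reaction between the non-empty complexes \(X+Y\) and \(X+2Y\) has positive rates in both directions.

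Second, I would verify the limit directly rather than relying on the closed form, whose \(\pm\) sign is ambiguous. I would take \(x(0)=1\), so \(x(t)=e^{-t}\). Substituting the new time variable \(s=\int_0^t x=1-e^{-t}\) turns the \(y\)-equation into the logistic equation \(dy/ds=k_3y-k_4y^2\) run up to finite time \(s=1/k_1\). Its explicit logistic solution evaluated at \(s=1/k_1\) gives \(y_\infty\) as a rational expression in \(e^{k_3/k_1}\) and \(y(0)\). Solving that expression for \(e^{2}\) and invoking Lindemann–Weierstrass, as in the proposition, shows \(y_\infty\notin ALG\).

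Third, I would connect this back to Proposition~\ref{prop:pos_steady_state_ALG}. That proposition covers only positive stable states, and the mechanism that escapes it is exactly the vanishing component \(x^*=0\). Along the line \(x=0\) every \(y\) is a fixed point, so \((0,y_\infty)\) is not isolated. The isolated-fixed-point argument from \cite{Fletcher2025_ALG_LCRN} therefore does not apply, which explains why non-positive equilibria are excluded from the earlier result.

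The main obstacle is Definition~\ref{def:revcrn_variants}, which requires \(x^*\) to be a stable fixed point. Because of the continuum of equilibria, \((0,y_\infty)\) is at best Lyapunov stable, not asymptotically stable. I would handle this by arguing Lyapunov stability directly: nearby initial data have small \(x(0)\), so the total logistic time \(x(0)/k_1\) is small and \(y\) moves only slightly. Since the corollary says only "may not", this weak notion of stability is enough.
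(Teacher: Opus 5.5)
Your proposal is correct and follows the paper's own proof. That proof simply points to the two-species semi-open example of Proposition~\ref{prop:ALG_RevCRN_nequiv}, where $x^*=0$ and $y_\infty=2e^2/(1+e^2)$ is transcendental. Your time-change derivation of $y_\infty$, your remark that $(0,y_\infty)$ is non-isolated, and your Lyapunov-stability check add rigor the paper omits. One detail to state explicitly: the value $2e^2/(1+e^2)$ corresponds to the initial condition $y(0)=1$. More generally, any positive rational $y(0)\neq k_3/k_4$ also gives a transcendental $y_\infty$.
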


\begin{proof}
    The example provided in Proposition \ref{prop:ALG_RevCRN_nequiv} shows that there exist RevCRNs that can compute transcendental numbers where one of the species converges to 0 at the limit.
\end{proof}

\begin{corollary}
    $\mathbb{R}^{1s}_{RevCRN}\subsetneqq \mathbb{R}_{RevCRN}^{2s}$
    \label{cor:revcrn1s_revcrn2s}
\end{corollary}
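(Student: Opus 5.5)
The plan is to combine the inclusion chain of Proposition~\ref{prop:nested_inclusion_chain} with the separating example from the proof of Proposition~\ref{prop:ALG_RevCRN_nequiv}. The inclusion $\mathbb{R}^{1s}_{RevCRN}\subseteq\mathbb{R}^{2s}_{RevCRN}$ is the case $n=1$ of Proposition~\ref{prop:nested_inclusion_chain}. To see it concretely, take a one-species RevCRN computing $\alpha$ and add a catalytic species $X_2$ with $x_2(0)=1$. Every reaction is then modified so that $X_2$ appears on both sides. This leaves $dx_2/dt=0$, so $x_1$ follows the original dynamics, and boundedness, stability and convergence carry over.

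For strictness, I will exhibit an element of $\mathbb{R}^{2s}_{RevCRN}\setminus\mathbb{R}^{1s}_{RevCRN}$. By Corollary~\ref{cor:alg_equivalence}, $\mathbb{R}^{1s}_{RevCRN}=ALG$, so it suffices to find a transcendental number computed by a two-species RevCRN. The semi-open network used in Proposition~\ref{prop:ALG_RevCRN_nequiv} has exactly two species, $X$ and $Y$. It consists of $X\to\phi$ (with $k_2=0$, allowed in semi-open systems) and $X+Y\rightleftharpoons X+2Y$. With $k_1=1,k_3=2,k_4=1$ it yields $y_\infty=2e^2/(1+e^2)$, which is transcendental by the Lindemann--Weierstrass argument already given. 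Hence $y_\infty\in\mathbb{R}^{2s}_{RevCRN}$ and $y_\infty\notin ALG=\mathbb{R}^{1s}_{RevCRN}$.

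The main obstacle is checking that this example really satisfies Definition~\ref{def:revcrn_variants}, not merely that $y(t)$ has the right limit. First, the initial conditions must be rational; I would take $x(0)=1$ and $y(0)$ a positive rational such as $1$. The closed form should then be rederived by separating variables in $dy/dt=x(k_3y-k_4y^2)$ with $\int_0^\infty x\,dt=1/k_1$; with $y(0)=1$ the limit must be recomputed and checked to be of the form $r(e^{c})$ for a nonconstant rational function $r$ over $\mathbb{Q}$, so transcendence still follows. Second, boundedness holds because $x\le 1$ and $y$ stays between its initial value and $k_3/k_4$. Third, the limit point $(0,y_\infty)$ must be a stable fixed point. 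Every point $(0,y)$ is an equilibrium, and a small perturbation changes the trajectory's limit only continuously, since the limit depends continuously on $(x(0),y(0))$. So the point is Lyapunov stable, though not asymptotically stable, and this is enough because Definition~\ref{def:revcrn_variants} only requires stability.
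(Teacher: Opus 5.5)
Your proposal is correct and follows the paper's route. The inclusion is Proposition~\ref{prop:nested_inclusion_chain} with $n=1$, and strictness comes from the two-species semi-open network of Proposition~\ref{prop:ALG_RevCRN_nequiv}, which computes the transcendental number $2e^2/(1+e^2)\notin ALG=\mathbb{R}^{1s}_{RevCRN}$. Your extra checks, which the paper leaves implicit, go through: $x(0)=y(0)=1$ reproduces exactly the paper's limit, boundedness is as you say, and $(0,y_\infty)$ is Lyapunov stable. For stability, it is the monotonicity of $x$ and $y$ along trajectories, combined with the continuous dependence of the limit on $(x(0),y(0))$, that keeps nearby trajectories close for all time; continuity of the limit alone would not suffice.
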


\begin{proof}
    The corollary follows directly from the proof of Proposition \ref{prop:ALG_RevCRN_nequiv}.
\end{proof}

\begin{corollary}
    $\mathbb{R}^{1s}_{RevCRN}\subsetneqq \mathbb{R}_{RTCRN}$ and thus $\mathbb{R}_{RevCRN}\cap \mathbb{R}_{RTCRN}\neq\phi$
    \label{cor:revcrn_rtcrn}
\end{corollary}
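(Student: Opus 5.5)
The plan is to chain together results already stated. The first is Corollary \ref{cor:alg_equivalence}, which gives $\mathbb{R}^{1s}_{RevCRN}=ALG=\mathbb{R}_{LCRN}$. The second is the known inclusion $\mathbb{R}_{LCRN}\subseteq\mathbb{R}_{RTCRN}$ from \cite{Huang2019}, recalled in the introduction. Putting these together yields $\mathbb{R}^{1s}_{RevCRN}\subseteq\mathbb{R}_{RTCRN}$ directly, with no new construction needed.

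Next, strictness. I will exhibit an element of $\mathbb{R}_{RTCRN}$ that is not algebraic. The natural candidates are $e$ and $\pi$, which \cite{Huang2019} shows are real-time CRN computable. Since they are transcendental (Lindemann--Weierstrass, already used in Proposition \ref{prop:ALG_RevCRN_nequiv}), neither lies in $ALG=\mathbb{R}^{1s}_{RevCRN}$. Alternatively, the statement $ALG\subsetneqq\mathbb{R}_{RTCRN}$ can be quoted from \cite{Huang2019} outright. Either route gives $\mathbb{R}^{1s}_{RevCRN}\subsetneqq\mathbb{R}_{RTCRN}$.

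For the nonempty intersection, Proposition \ref{prop:union_equivalence} (or simply Definition \ref{def:n_species_revcrn}) gives $\mathbb{R}^{1s}_{RevCRN}\subseteq\mathbb{R}_{RevCRN}$. Hence $\mathbb{R}^{1s}_{RevCRN}\subseteq\mathbb{R}_{RevCRN}\cap\mathbb{R}_{RTCRN}$. To make nonemptiness explicit, I will point to a concrete witness: Lemma \ref{lem:rational} shows every positive rational, e.g.\ $1$ or $\sqrt{2}$, lies in $\mathbb{R}^{1s}_{RevCRN}$. (The paper's $\phi$ in the statement denotes the empty set.)

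The only delicate point is a mismatch in parameter conventions. RTCRN and LCRN require integer rate constants and a zero initial state, whereas RevCRNs permit rational parameters and arbitrary rational initial values. The direction used here is LCRN $\to$ RTCRN, which is entirely within the cited framework, so this mismatch does not affect the inclusion. Positivity is also harmless, since the classes are compared on nonnegative reals via $|\alpha|$. I expect this bookkeeping to be the main thing to state carefully; the argument itself is otherwise immediate.
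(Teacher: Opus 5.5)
Your proposal is correct and follows the paper's own route. You chain $\mathbb{R}^{1s}_{RevCRN}=ALG=\mathbb{R}_{LCRN}\subsetneqq\mathbb{R}_{RTCRN}$ using Corollary~\ref{cor:alg_equivalence} together with the cited results of Huang et al.\ and Fletcher et al., and then use $\mathbb{R}^{1s}_{RevCRN}\subseteq\mathbb{R}_{RevCRN}$ for the nonempty intersection; your explicit transcendental witnesses $e,\pi$ just make the strictness step concrete. One cosmetic slip: $\sqrt{2}$ is not rational, although it does lie in $\mathbb{R}^{1s}_{RevCRN}$ via the $\phi\rightleftharpoons 2X$ construction of Lemma~\ref{lem:rational}, and the witness $1$ alone already suffices.
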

    
\begin{proof}
    Corollary \ref{cor:revcrn1s_revcrn2s} implies that $\mathbb{R}^{1s}_{RevCRN}\subsetneqq \mathbb{R}_{RevCRN}$. Further from \cite{Huang2019}, \cite{Fletcher2025_ALG_LCRN} and corollary \ref{cor:alg_equivalence} we can summarize the known inclusions as follows: $\mathbb{Q}\subsetneqq \mathbb{R}_{RevCRN}^{1s}=ALG=\mathbb{R}_{LCRN} \subsetneqq \mathbb{R}_{RTCRN}$. The rest of the corollary statement directly follows from these results.
\end{proof}

These results indicate a clear separation between the family of real numbers computed by 1 and 2-species RevCRNs, leading to the following general hierarchical separation problem.

    \begin{problem*}[$\mathbb{R}_{\mathrm{RevCRN}}$ Hierarchy] 
Recalling Definition~\ref{def:n_species_revcrn} and Proposition~\ref{prop:nested_inclusion_chain}, we have established that the sets of species-restricted computable reals satisfy the nested inclusion chain $\mathbb{R}_{\mathrm{RevCRN}}^{n\mathrm{s}} \subseteq \mathbb{R}_{\mathrm{RevCRN}}^{(n+1)\mathrm{s}}$ for all $n \in \mathbb{N}_{>0}$. We pose the following open questions regarding the strictness of these inclusions:
\begin{enumerate}
    \item Does strict inclusion hold globally, meaning $\mathbb{R}_{\mathrm{RevCRN}}^{n\mathrm{s}} \subset \mathbb{R}_{\mathrm{RevCRN}}^{(n+1)\mathrm{s}}$ for all $n \in \mathbb{N}_{>0}$?
    \item Alternatively, does the hierarchy collapse at some finite dimension, meaning there exists an index $k \in \mathbb{N}_{>0}$ such that $\mathbb{R}_{\mathrm{RevCRN}}^{n\mathrm{s}} = \mathbb{R}_{\mathrm{RevCRN}}^{(n+1)\mathrm{s}}$ for all $n \geq k$?
\end{enumerate}
If strict inclusion holds universally across all $n$, it reveals a well-defined infinite hierarchy of real numbers partitioned strictly by the minimum number of species required for their computation using RevCRNs.
\label{prob:revcrn_hierarchy}
\end{problem*}

The answer to this problem is hard. But we can initiate our investigation by examining whether such a hierarchy exists for the parameterized constraints of these ODE system classes. The following section provides a detailed discussion on this subject.

\subsection{Construction of 3-dimensional Constrained Hierarchy}

To begin with, we consider $\mathbb{S}\subseteq\mathbb{N}$ and $\mathbb{K}\subseteq\mathbb{Q}$ the set of stoichiometric coefficients and rate constants of any RevCRN system under study. We define a constraint parameter $\alpha\in\mathbb{N}$, such that the following holds
\begin{equation}
    \sum_{s\in\mathbb{S}}s+\sum_{k\in\mathbb{K}}k\leq\alpha
\end{equation}

The parameter $\alpha$ imposes a structural constraint that induces a specific subclass of RevCRNs. Such constraints can be applied to systems regardless of their species cardinality. In particular, we let $\alpha_n$ denote the constraint parameter corresponding to an $n$-species RevCRN system, where $n \in \mathbb{N}_{>0}$. We consider a three-dimensional parameter space defined over $\mathbb{N}_{>0}^2 \times \mathbb{Q}_{\geq 0}$, where each dimension corresponds to a key system property: the maximum degree of the rate equation polynomials $d \in \mathbb{N}_{>0}$, the total number of species $|S| \in \mathbb{N}_{>0}$ participating in the reaction network, and the constraint parameter $\alpha \in \mathbb{Q}_{\geq 0}$. Note that, like $\alpha_n$, we will also denote $d_n$ with $n\in\mathbb{N}$ to be the degree parameter corresponding to the $n-$species RevCRN system. Within this coordinate framework, we formally pose the constrained hierarchy problem as follows:

\begin{problem*}[Constrained $\mathbb{R}_{RevCRN}$ Hierarchy]
Characterize the internal hierarchy of RevCRN-computable real numbers over the parameter space $\mathbb{N}_{>0}^2 \times \mathbb{Q}_{\geq 0}$ under localized parameter constraints. Specifically:
\begin{enumerate}
    \item Determine whether fixing at least one parameter—thereby confining the systems to a two-dimensional hyperplane—induces a strict nested hierarchy of computable sets as the remaining parameters are varied across that plane.
    \item Establish whether the size of the resulting sets of real numbers monotonically increases as a function of the varying parameters along any given hyperplane.
\end{enumerate}
\label{prob:constrained_hierarchy}
\end{problem*}


Let $\alpha_3 \in \mathbb{Q}$ be an arbitrary constraint parameter. For any choice of parameters $\alpha_2 \in \mathbb{Q}$ and $d_2 \in \mathbb{N}$ satisfying $\alpha_2 < \alpha_3$, we let $\mathbb{R}_{2s}^{\alpha_2,d_2}$ denote the set of real numbers computable by a $2$-species RevCRN. To evaluate the set expansion under an invariant degree constraint, we explicitly fix $d_3 = d_2$ and denote the corresponding $3$-species computable set as $\mathbb{R}_{3s}^{\alpha_3,d_2}$. We initiate our discussion on hierarchical separation with the following result.

\begin{lemma}\label{lemma:base_case}
There exists a real number $r \in \mathbb{R}_{RevCRN}$ such that:
\begin{equation*}
r \in \mathbb{R}_{3s}^{\alpha_3,d_2} \setminus \mathbb{R}_{2s}^{\alpha_2,d_2}
\end{equation*}
\end{lemma}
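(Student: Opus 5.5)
The plan is a cardinality and budget argument: first show that $\mathbb{R}_{2s}^{\alpha_2,d_2}$ is finite, then exhibit a $3$-species network, within budget $\alpha_3$ and degree $d_2$, whose output lies outside that finite set. The key observation is that the constraint $\sum_{s\in\mathbb{S}}s+\sum_{k\in\mathbb{K}}k\leq\alpha_2$ leaves only finitely many admissible networks. Stoichiometric coefficients are natural numbers whose sum is bounded by $\alpha_2$. I will read the constraint as also restricting rate constants to a fixed finite grid of rationals (e.g.\ bounded height, or natural rate constants as in Lemma~\ref{lem:rational}); this is the interpretation needed to make the class finite. Since the degree is at most $d_2$ and there are two species, only finitely many complexes and reactions can occur. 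Each such network has finitely many stable fixed points reachable from the admissible rational initial vectors, and hence finitely many output values. This holds at least under the reading that each network contributes its attracting equilibria, using the boundedness from Lemma~\ref{lem:bound} together with the finiteness of isolated equilibria of a polynomial system. I conclude that $\mathbb{R}_{2s}^{\alpha_2,d_2}$ is a finite set, say $F$.

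Second, since $\alpha_2<\alpha_3$, every $2$-species network admissible for $(\alpha_2,d_2)$ becomes, after appending an inert third species, a $3$-species network admissible for $(\alpha_3,d_2)$. The extra species occurs with zero coefficients and so adds nothing to the budget; the argument is the one in Proposition~\ref{prop:nested_inclusion_chain}. Hence $F\subseteq\mathbb{R}_{3s}^{\alpha_3,d_2}$. It remains to use the slack $\alpha_3-\alpha_2>0$ to produce a value outside $F$.

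Third, the construction of a new value. I would use a one-species rational generator $\phi\rightleftharpoons X$ (Lemma~\ref{lem:rational}), whose equilibrium $x^*=k_f/k_b$ has degree $1\leq d_2$. I would place it as one of the three species, with its rate constants tuned within the unused budget. The aim is that some single admissible choice of $(k_f,k_b)$, fitting under $\alpha_3$, gives $k_f/k_b\notin F$. The cleanest way to obtain this is to compare counts: the number of distinct values $k_f/k_b$ realizable with $k_f+k_b+2\leq\alpha_3$ should exceed the number of those realizable under $\alpha_2$. If that comparison is awkward, there is an alternative. Take a $2$-species network attaining the largest (or an extremal) value in $F$, and use the extra budget to scale one rate constant by a small rational factor. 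This perturbs the output to a nearby value not in the finite set $F$, and it stays within degree $d_2$.

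The main obstacle is the bookkeeping that guarantees the new value is genuinely not in $F$. Budget slack alone does not obviously create a new fixed-point value, because $\alpha_2$ may already allow $2$-species networks that realize many rationals. I would first try the extremal argument. Rationals $k_f/k_b$ with $k_f+k_b\leq\alpha_3-2$ include $\alpha_3-3$ (taking $k_b=1$). This exceeds every bounded equilibrium of any $(\alpha_2,d_2)$ network, because such equilibria are bounded in terms of the rate constants, roughly by the ratio of the largest to the smallest admissible constant, which is at most about $\alpha_2$. If necessary, I would pass from $\alpha_3$ to the integer part of $\alpha_3$ to make this comparison strict.
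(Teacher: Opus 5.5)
\textbf{There is a genuine gap in your first step, and the rest of the argument depends on it.}

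The budget $\alpha_2$ charges only stoichiometric coefficients and rate constants. Initial concentrations are free positive rationals. Moreover, Definition~\ref{def:revcrn_variants} asks only for a \emph{stable} fixed point, not an isolated or asymptotically stable one. So a single cheap network already has infinitely many outputs.

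Take the closed network $X\rightleftharpoons Y$ with unit rate constants (degree $1$) and start it at $x(0)=y(0)=q$. It sits at the Lyapunov-stable equilibrium $(q,q)$ on its line of equilibria. Hence every $q\in\mathbb{Q}_{>0}$ lies in $\mathbb{R}_{2s}^{\alpha_2,d_2}$ at a cost independent of $q$. A species that takes part in no reaction does the same at zero cost.

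Conservation laws like this produce continua of equilibria, so your appeal to the finiteness of isolated equilibria does not apply. Discretizing the rate constants does not rescue it either: with natural rate constants and natural initial data this network still outputs every half-integer $\geq 1$. So your $F$ is neither finite nor bounded.

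\textbf{This also defeats your third step.}

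Every witness you propose, $k_f/k_b$ and in particular $\alpha_3-3$ or its integer part, is a non-negative rational, so it already lies in $\mathbb{R}_{2s}^{\alpha_2,d_2}$. The other two routes do not help:
\begin{itemize}
\item The count comparison only shows that some ratio is unreachable by $\phi\rightleftharpoons X$ itself within $\alpha_2$, not by all $2$-species networks.
\item The perturbation fallback presupposes finiteness.
\end{itemize}

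Your bound ``equilibria are at most about $\alpha_2$'' is false for the reason above. It fails even for open networks with a unique equilibrium, because stoichiometry amplifies. Take inflow $\phi\to Y$ at rate $a$, the reaction $Y\rightleftharpoons mX$ with $m\le d_2$, and outflow $X\to\phi$ at rate $e$. The combination $my+x$ satisfies $\frac{d}{dt}(my+x)=ma-ex$, which forces $x^*=ma/e$. That is a factor $m$ above the rate-constant ratio.

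So a separating $r$ must be irrational, and its exclusion has to come from structure rather than from counting or a size bound.

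\textbf{How the paper proceeds.} The paper adjoins $Z$ through the catalytic ladder $Z+Y\rightleftharpoons Y+2Z$, $Z\rightleftharpoons\phi$, $Z+X\rightleftharpoons X+2Z$. This makes $z_\infty$ a root of a quadratic whose coefficients involve $x_\infty,y_\infty$. It then tunes the rate constants to push $z_\infty$ above $\sup Y_\infty$.

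Be aware that this decisive step also needs $\sup Y_\infty<\infty$, which the paper takes from Lemma~\ref{lem:bound}. The $X\rightleftharpoons Y$ example shows that this supremum is infinite under the stated definitions. Either route therefore needs an extra restriction, such as a fixed initial vector, before it can close.
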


\begin{proof}
Let $X$ and $Y$ be species participating in an arbitrary two-species reversible chemical reaction network governing the concentration trajectories $x(t)$ and $y(t)$ according to the general system in Eq.~\ref{eq:2sp_general}. We assume these networks converge asymptotically to stable fixed points, yielding the equilibrium concentrations $x_\infty = \lim_{t\rightarrow\infty} x(t)$ and $y_\infty = \lim_{t\rightarrow\infty} y(t)$. We examine the relationship between the computable classes $\mathbb{R}_{2s}^{\alpha_2,d_2}$ and $\mathbb{R}_{3s}^{\alpha_3,d_2}$ by fixing the parameters $\alpha_2 \in \mathbb{Q}$ and $d_2 \in \mathbb{N}$ such that $\alpha_2, d_2 < \infty$. Note that for any $n\in\mathbb{N}$, a choice of $\alpha_n$ inherently induces an upper bound on the maximum polynomial degree $d_n$. 

To demonstrate strict separation, we augment this system with a third species, $Z$, via a network expansion where $X$ and $Y$ act as pure catalysts:
\begin{equation} 
Z+Y \xrightleftharpoons[k_2]{k_1} Y+Z+Z, \qquad Z \xrightleftharpoons[k_4]{k_3} \emptyset, \qquad Z+X \xrightleftharpoons[k_6]{k_5} X+Z+Z 
\label{reac:ladder} 
\end{equation}
Because $X$ and $Y$ function exclusively as catalysts in reaction network~\ref{reac:ladder}, their underlying rate equations and trajectories remain unaltered. Consequently, the augmented network preserves the exact dynamics of any two-species system computing a value in $\mathbb{R}_{2s}^{\alpha_2,d_2}$, while contributing an additional ODE to the system:
\begin{equation} 
\frac{dz}{dt} = (k_1y + k_5x)z - (k_2y + k_6x)z^2 + k_4 - k_3z 
\label{eq:ode_3s} 
\end{equation}
While tracking the solution of Eq.~\ref{eq:ode_3s} is non-trivial, its asymptotic behavior can be evaluated at the steady-state limit. Setting $\frac{dz}{dt} = 0$ as $t \rightarrow \infty$ yields:
\begin{equation} 
\begin{split} 
\lim_{t\rightarrow\infty} \left[ (k_1y + k_5x)z - (k_2y + k_6x)z^2 + k_4 - k_3z \right] &= \lim_{t\rightarrow\infty}\frac{dz}{dt} = 0 \\ 
(k_1y_\infty + k_5x_\infty) z_\infty - (k_2y_\infty + k_6x_\infty) z_\infty^2 + k_4 - k_3z_\infty &= 0 
\end{split} 
\label{eq:3rd_specie_rate_limit} 
\end{equation}
For any initialization resulting in $x_\infty > 0$ or $y_\infty > 0$, solving this quadratic equation yields the steady-state roots:
\begin{equation} 
z_\infty = \frac{(k_1y_\infty + k_5x_\infty - k_3) \pm \sqrt{(k_1y_\infty + k_5x_\infty - k_3)^2 + 4(k_2y_\infty + k_6x_\infty)k_4}}{2(k_2y_\infty + k_6x_\infty)} 
\label{eq:z_solution} 
\end{equation}
Physically viable concentrations require $z_\infty$ to be real and non-negative ($z_\infty \in \mathbb{R}_{\geq 0}$). The existence of a stable attractor within this domain is structurally guaranteed by the signs of the vector fields in Eq.~\ref{eq:ode_3s}. Specifically, evaluated at the boundary $z = 0$, we find $\frac{dz}{dt} = k_4 > 0$ whenever $k_4 > 0$. Conversely, for sufficiently large values of $z$, the quadratic term dominates, ensuring that $\frac{dz}{dt} < 0$. Since the rate equation is continuous, by the Intermediate Value Theorem \cite{realanalysisintroduction}, this sign reversal confirms the presence of at least one stable fixed point.

Let $X_\infty$ and $Y_\infty$ denote the complete sets of fixed points attainable by species $X$ and $Y$, respectively, across the entire family of valid two-species networks, such that $\mathbb{R}_{2s}^{\alpha_2,d_2} = X_\infty \cup Y_\infty$. Because $\alpha_2 < \infty$, the maximum stoichiometric values and rate constants are strictly bounded ($\max \mathbb{S}, \max \mathbb{K} < \infty$). We establish the strict expansion of the computable space via the following claims.

\begin{claim} 
For any $x_\infty \in X_\infty$ and $y_\infty \in Y_\infty$, there always exists a valid selection of rate constants such that the augmented network~\ref{reac:ladder} produces an equilibrium concentration satisfying $z_\infty > x_\infty$ and $z_\infty > y_\infty$, where $z_\infty \in Z_\infty \subseteq \mathbb{R}_{3s}^{\alpha_3,d_2}$.
\label{claim:strict_increasing} 
\end{claim} 

\begin{claimproof} 
We evaluate the positive root of Eq.~\ref{eq:z_solution} under the target inequalities by examining two distinct directional bounds:
\begin{enumerate} 
    \item \textbf{Case 1 ($z_\infty > x_\infty$):} Squaring and rearranging the root expression reveals that this inequality is structurally satisfied whenever the network parameters satisfy:
    \begin{equation} 
    x_\infty^2(k_2y_\infty + k_6x_\infty) - x_\infty(k_1y_\infty + k_5x_\infty - k_3) - k_4 < 0 
    \label{eq:x_infty} 
    \end{equation} 
    \item \textbf{Case 2 ($z_\infty > y_\infty$):} Following an identical algebraic expansion, the condition $z_\infty > y_\infty$ holds true whenever:
    \begin{equation} 
    y_\infty^2(k_2y_\infty + k_6x_\infty) - y_\infty(k_1y_\infty + k_5x_\infty - k_3) - k_4 < 0 
    \label{eq:y_infty} 
    \end{equation} 
\end{enumerate} 
By Lemma~\ref{lem:bound}, the finiteness of the structural parameters ensures that the sets $X_\infty$ and $Y_\infty$ are bounded, implying that their suprema and infima exist. Consequently, for any arbitrary fixed pair of equilibrium values $x_\infty$ and $y_\infty$, there always exists a rational parameter configuration $k_1, \dots, k_6 \in \mathbb{Q}_{\geq 0}$ that satisfies both quadratic inequalities simultaneously. This confirms the existence of a reachable steady state $z_\infty \in Z_\infty$ that strictly bounds the baseline coordinate pair.
\end{claimproof}

By assumption, let $\sup X_\infty \leq \sup Y_\infty$ without loss of generality. For any arbitrary small bounds $\varepsilon_1, \varepsilon_2 >0$, we can select baseline trajectories approaching their limits such that $y_\infty \in [\sup Y_\infty - \varepsilon_1, \sup Y_\infty)$ and $x_\infty \in [\sup X_\infty - \varepsilon_2, \sup X_\infty)$. By applying Claim~\ref{claim:strict_increasing}, the corresponding augmented network produces an equilibrium state satisfying $z_\infty > \sup Y_\infty \geq y_\infty$. Because $z_\infty$ strictly exceeds the supremum of the baseline space, it follows that $z_\infty \notin \mathbb{R}_{2s}^{\alpha_2,d_2}$. 

To complete the separation proof under the structural budget framework, we establish how this network modification impacts the parameter space constraint.

\begin{claim} 
The structural constraint parameters satisfy $\alpha_3 > \alpha_2$, where $\alpha_2, \alpha_3 \in \mathbb{Q}_{\geq 0}$. 
\end{claim} 

\begin{claimproof} 
Let $\mathbb{S}_2, \mathbb{S}_3$ and $\mathbb{K}_2, \mathbb{K}_3$ denote the sets of stoichiometric coefficients and reaction rate constants for the two-species and three-species systems, respectively, bounded such that:
\begin{equation*} 
\sum_{s\in\mathbb{S}_2}s + \sum_{k\in\mathbb{K}_2}k \leq \alpha_2 \qquad \text{and} \qquad \sum_{s\in\mathbb{S}_3}s + \sum_{k\in\mathbb{K}_3}k \leq \alpha_3 
\end{equation*}
The introduction of the third species $Z$ requires appending the additional reactions specified in network~\ref{reac:ladder} to the system. This expansion introduces a non-empty set of strictly positive stoichiometric coefficients $\mathbb{S}^+$ and positive rational rate constants $\mathbb{K}^+$, such that the total structural compositions expand to $\mathbb{S}_3 = \mathbb{S}_2 \cup \mathbb{S}^+$ and $\mathbb{K}_3 = \mathbb{K}_2 \cup \mathbb{K}^+$. Summing over the newly combined indices yields:
\begin{equation*} 
\begin{split} 
\sum_{s\in\mathbb{S}_3}s + \sum_{k\in\mathbb{K}_3}k &= \sum_{s_2\in\mathbb{S}_2}s_2 + \sum_{k_2\in\mathbb{K}_2}k_2 + \sum_{s^+\in\mathbb{S}^+}s^+ + \sum_{k^+\in\mathbb{K}^+}k^+ \\ 
&> \sum_{s_2\in\mathbb{S}_2}s_2 + \sum_{k_2\in\mathbb{K}_2}k_2 
\end{split} 
\end{equation*}
It follows directly from the lower-bound maximization constraint that:
\begin{equation*}
\alpha_2 < \sum_{s_2\in\mathbb{S}_2}s_2 + \sum_{k_2\in\mathbb{K}_2}k_2 + \sum_{s^+\in\mathbb{S}^+}s^+ + \sum_{k^+\in\mathbb{K}^+}k^+ \leq \alpha_3
\end{equation*}
which proves $\alpha_2 < \alpha_3$, thereby concluding the proof of the lemma.
\end{claimproof}
\end{proof}

\vspace{-0.35cm}

The hierarchical separation established in Lemma~\ref{lemma:base_case} generalizes inductively across higher species counts. Let $\alpha_{n+1} \in \mathbb{Q}$ be an arbitrary constraint parameter for an $(n+1)$-species network. For any choice of parameters $\alpha_n \in \mathbb{Q}$ and $d_n \in \mathbb{N}$ satisfying $\alpha_n < \alpha_{n+1}$, we define $\mathbb{R}_{ns}^{\alpha_n,d_n}$ as the $n$-species computable set. Enforcing an invariant degree constraint across dimensions by setting $d_{n+1} = d_n$, we define the corresponding $(n+1)$-species computable set as $\mathbb{R}_{(n+1)s}^{\alpha_{n+1},d_n}$. This yields the following general theorem.

\begin{theorem}\label{thm:general_case_expansion}
There exists a real number $r \in \mathbb{R}_{RevCRN}$ such that:
\begin{equation}
r \in \mathbb{R}_{(n+1)s}^{\alpha_{n+1},d_n} \setminus \mathbb{R}_{ns}^{\alpha_n,d_n}
\end{equation}
\end{theorem}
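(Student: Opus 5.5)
The plan is to mimic the proof of Lemma~\ref{lemma:base_case}, replacing the two catalysts $X,Y$ by the $n$ species $X_1,\dots,X_n$ of an arbitrary $n$-species RevCRN whose output lies in $\mathbb{R}_{ns}^{\alpha_n,d_n}$. Fix such a network $\mathcal{N}=(S,R)$ with $|S|=n$. We adjoin a fresh species $Z$ with the reactions
\[
Z+X_i \xrightleftharpoons[k_{2i}]{k_{2i-1}} X_i+Z+Z \quad (1\le i\le n), \qquad Z \xrightleftharpoons[k_{0}']{k_0} \emptyset .
\]
Each $X_i$ is a pure catalyst in these reactions, so the original $n$ trajectories are unchanged, exactly as argued in Proposition~\ref{prop:nested_inclusion_chain}. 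The only new ODE is
\[
\frac{dz}{dt}=\Big(\sum_i k_{2i-1}x_i\Big)z-\Big(\sum_i k_{2i}x_i\Big)z^2+k_0'-k_0z,
\]
which is quadratic in $z$. Hence the degree is not raised beyond $d_n$ whenever $d_n\ge 3$, counting the cubic monomials $x_iz^2$. If $d_n$ is smaller, I would drop the $X_i$-catalysed terms and use only the 1-species-like part, at the cost of a weaker construction. Either way $d_{n+1}=d_n$ is respected.

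The key steps follow Lemma~\ref{lemma:base_case}. First, by Corollary~\ref{lem:bound11}, each $x_i$ is bounded. Because $\alpha_n<\infty$ there are only finitely many networks with natural stoichiometry and admissible rationals whose budget is at most $\alpha_n$; I would also restrict initial values, as the budget implicitly requires. So the set $\mathbb{R}_{ns}^{\alpha_n,d_n}$ of attainable equilibrium values is bounded above, say by $M$. Second, as $t\to\infty$ the $z$-equation converges to the limiting quadratic obtained by substituting $x_{i,\infty}$. At $z=0$ the right-hand side equals $k_0'>0$, and it is negative for large $z$, so the Intermediate Value argument gives a stable positive root $z_\infty$. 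This root attracts $z(t)$ by the asymptotically autonomous one-dimensional comparison already used in Lemma~\ref{lemma:base_case}. Third, we choose the rates so that $z_\infty>M$. The simplest choice is to take all $X_i$-catalysed rates tiny and let the inflow/outflow pair dominate, so that $z_\infty\approx k_0'/k_0$; choosing $k_0'/k_0>M$ gives $z_\infty>M$. This is the analogue of Claim~\ref{claim:strict_increasing}. It follows that $r:=z_\infty\notin\mathbb{R}_{ns}^{\alpha_n,d_n}$, while $r\in\mathbb{R}_{(n+1)s}^{\alpha_{n+1},d_n}$. Finally, the budget bookkeeping is identical to the second claim of Lemma~\ref{lemma:base_case}: the appended reactions contribute strictly positive stoichiometries and rates, so the total exceeds $\alpha_n$. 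We therefore need $\alpha_{n+1}$ large enough to include them, consistent with $\alpha_n<\alpha_{n+1}$.

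The main obstacle is the third step combined with the budget. The number $r$ must exceed $M$, and $M$ itself depends on $\alpha_n$, while the new rates $k_0',k_0$ must fit into $\alpha_{n+1}-\alpha_n$. If $\alpha_{n+1}$ is only slightly larger than $\alpha_n$, then $k_0'/k_0$ cannot be made arbitrarily large. The first thing I would try is to bound $M$ explicitly in terms of $\alpha_n$, using the logistic-type estimate from the proof of Lemma~\ref{lem:bound}: the equilibrium is at most a ratio of rate constants raised to powers bounded by the stoichiometry. If that fails, I would instead produce an $r$ not in the finite-budget set by a counting or genericity argument. The finitely many $n$-species networks yield countably many values, so some attainable $z_\infty$ avoids them. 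This only requires that varying the new rate constants over admissible rationals produces infinitely many distinct $z_\infty$, which follows because $z_\infty$ depends non-trivially on $k_0'/k_0$.
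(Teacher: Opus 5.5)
Your construction (the catalytic ladder $Z+X_i\rightleftharpoons X_i+2Z$ plus $Z\rightleftharpoons\emptyset$), the limiting quadratic, the sign-change argument and the budget bookkeeping are exactly the paper's proof. The paper then concludes non-membership from the existence of rates with $ax_i^2-bx_i-c<0$, i.e.\ $x_{(n+1)\infty}>x_{i,\infty}$, relying implicitly on the supremum argument of Lemma~\ref{lemma:base_case}. You make explicit what that step needs, namely a uniform bound $M\ge\sup\mathbb{R}_{ns}^{\alpha_n,d_n}$, and this is where the argument breaks. The budget controls the \emph{sum} of the rate constants, not their ratios. Rate constants range over all positive rationals, so there are infinitely many admissible networks, not finitely many. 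Concretely, $\emptyset\rightleftharpoons X_1$ with inflow $p\varepsilon$ and outflow $q\varepsilon$ (other species inert) costs $1+(p+q)\varepsilon$ and converges to $p/q$ from any initial value. Hence $\mathbb{R}_{ns}^{\alpha_n,d_n}\supseteq\mathbb{Q}_{>0}$ as soon as $\alpha_n>1$, and no $M$ exists, whatever you do with initial data. The same observation removes your worry that $k_0'/k_0$ cannot be made large inside $\alpha_{n+1}-\alpha_n$. For the same reason, though, size can never certify non-membership. In particular your small-$d_n$ fallback leaves $z_\infty=k_0'/k_0\in\mathbb{Q}$ and so produces nothing new.

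The counting fallback fails too. Networks and initial vectors are rational, so $\mathbb{R}_{ns}^{\alpha_n,d_n}$ and $\mathbb{R}_{(n+1)s}^{\alpha_{n+1},d_n}$ are both countable, and infinitely many distinct values $z_\infty$ can all lie in a countable set (think of $\mathbb{Q}\setminus\mathbb{Q}=\emptyset$). What is missing is an invariant separating the new value from \emph{everything} the budgeted $n$-species class produces (algebraic or transcendence information, say). Moreover, no argument can work for every pair $\alpha_n<\alpha_{n+1}$. Read the budget as a sum over all coefficients and rates, as the paper's bookkeeping does, and take $n=1$, $\alpha_1=\tfrac32$, $\alpha_2=\tfrac{19}{10}$. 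A $2$-species network within budget $\tfrac{19}{10}<2$ has at most one nonzero stoichiometric coefficient. Its only nontrivial reaction is therefore a single $\emptyset\rightleftharpoons X_j$, and every value it computes is rational. The $1$-species class with budget $\tfrac32$ already contains every positive rational (and $0$, via pure outflow), so the set difference is empty. Relatedly, ``take $\alpha_{n+1}$ large enough'' is an extra hypothesis, not a consequence of $\alpha_n<\alpha_{n+1}$. Every appended reaction costs at least $1$ in stoichiometry alone, and what matters is that the whole $(n+1)$-species network fits under $\alpha_{n+1}$, not that its cost exceeds $\alpha_n$.
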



\begin{proof}
    In this case, we will use a general construction similar to the RevCRN construction described earlier. Here, we will use a set of species $S=\{X_j|j\in\mathbb{N},1\leq j\leq n+1\}$. Consider the follows RevCRN construction.
    \begin{equation}
        \begin{split}
            R1&:X_{n+1}+X_{1}\xrightleftharpoons[k_{12}]{k_{11}}X_{n+1}+X_{n+1}+X_{1}\\
            \vdots\\
            Rn&:X_{n+1}+X_{n}\xrightleftharpoons[k_{(n)2}]{k_{(n)1}}X_{n+1}+X_{n+1}+X_{n}\\
            R(n+1)&:X_{n+1}\xrightleftharpoons[k_{(n+1)2}]{k_{(n+1)1}}\phi\\
        \end{split}
    \end{equation}

    The mass-action kinetics of the above RevCRN will induce the following general ODE
    \begin{equation*}
        \frac{dX_{n+1}}{dt}=k_{(n+1)2}+\bigg(\sum_{i=1}^{n}k_{i1}x_i-k_{(n+1)1}\bigg)x_{n+1}-\bigg(\sum_{i=1}^{n}k_{i2}x_i\bigg)x_{n+1}^2
    \end{equation*}

In this case, we again get back a general quadratic equation with $a=-\lim_{t\rightarrow\infty}\bigg(\sum_{i=1}^{n}k_{i2}x_i\bigg)$, $b=\lim_{t\rightarrow\infty}\sum_{i=1}^{n}k_{i1}x_i-k_{(n+1)1}$, and $c=k_{(n+1)2}$ at the limit $t\rightarrow\infty$. On solving the quadratic equation in a similar way as done before, we get back the solution for $x_{(n+1)\infty}$ to be $\frac{-b\pm\sqrt{b^2-4ac}}{2a}$.

Note that claim \ref{claim:strict_increasing} can be extended to the general case as well. We can get $x_{n+1}>x_i,\forall i\in \mathbb{N}, 1\leq i\leq n$ whenever
\begin{equation}
    ax_i^2-bx_i-c<0
\end{equation}
It can be easily observed that, similar to the claim \ref{claim:strict_increasing}, there exists at least one combination of positive rational rate constants such that the above inequality holds. This implies that $\exists x_{(n+1)\infty}\in\mathbb{R}$ such that $x_{(n+1)\infty}\notin\mathbb{R}_{ns}^{\alpha_n,d_n}$.

Finally, we can construct the sets $\mathbb{S}_{n+1}=\mathbb{S}_n\cup\mathbb{S}^+$ and $\mathbb{K}_{n+1}=\mathbb{K}_n\cup\mathbb{K}^+$ and show that $\alpha_n<\alpha_{n+1}$.
\end{proof}


Let $\alpha_n \in \mathbb{Q}$ and $d_n \in \mathbb{N}$ be arbitrary constraints governing an $n$-species network, defining the computable set $\mathbb{R}_{ns}^{\alpha_n,d_n}$. We evaluate the increase in the computational capability of RevCRNs strictly due to an increase in the value of the constraint parameter by introducing a higher constraint bound $\alpha'_n \in \mathbb{Q}$ satisfying $\alpha'_n > \alpha_n$. Enforcing a strictly invariant parameter configuration for both the species count $n$ and the polynomial degree $d_n$, we denote the expanded computable set as $\mathbb{R}_{ns}^{\alpha'_n,d_n}$. This yields the following strict structural separation result.

\begin{proposition}\label{prop:constraint_expansion_separation}
There exists a real number $r \in \mathbb{R}_{RevCRN}$ such that:
\begin{equation}
r \in \mathbb{R}_{ns}^{\alpha'_n,d_n} \setminus \mathbb{R}_{ns}^{\alpha_n,d_n}
\end{equation}
\end{proposition}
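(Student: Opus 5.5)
The plan mirrors the strategy of Lemma~\ref{lemma:base_case}: exhibit a number that exceeds every value attainable under the smaller budget $\alpha_n$, but which becomes computable once the budget is raised to $\alpha'_n$.

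\textbf{Step 1: the smaller class is bounded.} For fixed $n$, $d_n$ and $\alpha_n$, every stoichiometric coefficient and every rate constant is at most $\alpha_n$. The stoichiometric coefficients are natural numbers, so only finitely many reaction structures are admissible. The rate constants, however, range over rationals in a bounded set, so the admissible parameter set is bounded but not finite.

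For each admissible network, the steady-state conditions form a polynomial system of degree at most $d_n$. Its coefficients are bounded by a function of $\alpha_n$ alone. I would bound the positive stable equilibria with a Cauchy-type root bound applied to each coordinate, or more crudely by invoking Corollary~\ref{lem:bound11} with constants tracked uniformly in the parameters. This gives $M_{\alpha_n} := \sup \mathbb{R}_{ns}^{\alpha_n,d_n} < \infty$.

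I expect this uniform bound to be the main obstacle. Corollary~\ref{lem:bound11} bounds trajectories for each network separately, so its constants must be shown to depend only on $\alpha_n$ and $d_n$. The danger is small leading rate constants: a term $-\varepsilon x^{d}$ allows equilibria of order $\varepsilon^{-1/(d-1)}$. I would therefore first try to restrict $\mathbb{K}$ to a discrete set. Natural rate constants would do, as in Lemma~\ref{lem:rational}, or rationals of bounded denominator. Under such a restriction only finitely many networks remain, and the supremum is a maximum of finitely many numbers. If the restriction is not allowed, the sup-based argument fails and I would instead argue by cardinality: the number of attainable values is finite below the budget $\alpha_n$ but grows strictly with $\alpha'_n$.

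\textbf{Step 2: a construction under the larger budget.} I would use a single species with the reaction $\phi \xrightleftharpoons[q]{p} X$. Here the rate $p$ produces $X$ and the rate $q$ degrades it, so $\dot x = p - qx$ and $x_\infty = p/q$, exactly as in Lemma~\ref{lem:rational}. To reach exactly $n$ species without changing this value, I pad with $n-1$ inert species that have trivial reversible reactions. An alternative is the catalytic padding of Proposition~\ref{prop:nested_inclusion_chain}, which adds only coefficient-$1$ terms. The degree stays at $1 \le d_n$.

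The total cost of this network is a fixed overhead $c_n$ from the padding, plus $1 + p + q$. Choosing $q = 1$ and taking $p$ as large as the budget allows gives $r = p = \lfloor \alpha'_n - c_n - 2 \rfloor$.

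\textbf{Step 3: separation.} It remains to ensure $r > M_{\alpha_n}$. One needs to check that the overhead $c_n$ and the gap $\alpha'_n - \alpha_n$ leave room. The cleanest route is to compare with the best network at budget $\alpha_n$: take the network attaining the largest equilibrium value and spend the extra budget $\alpha'_n - \alpha_n$ on raising its source rate. This raises its equilibrium strictly, by monotonicity of the stable root in the constant term. The monotonicity follows from the sign of the eigenvalue at the stable root, as in Proposition~\ref{prop:stability_invariance}: an increase in the constant term moves the stable root to the right.

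If the gap $\alpha'_n - \alpha_n$ is below $1$ and integrality is imposed, I would instead add a small rational increment to a rate constant. This yields a value outside the finite set $\mathbb{R}_{ns}^{\alpha_n,d_n}$, and the result is again the required $r$.
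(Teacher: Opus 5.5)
Your worry about small rate constants in Step~1 is exactly right, and it is fatal. Under the paper's definitions the bound $M_{\alpha_n}<\infty$ is false, and none of your fallbacks repairs it.

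Multiplying \emph{every} rate constant by $\lambda\in\mathbb{Q}_{>0}$ multiplies the mass-action vector field by $\lambda$, i.e.\ it only reparametrizes time, $x_\lambda(t)=x(\lambda t)$. Fixed points, stability, boundedness, the limit from a given rational initial vector, and the degree are all unchanged. Only $\sum_{k\in\mathbb{K}}k$ changes: it is multiplied by $\lambda$. So the budget restricts only the natural number $\sum_{s\in\mathbb{S}}s$, and $\mathbb{R}_{ns}^{\alpha_n,d_n}$ is exactly the set of values of $n$-species networks of degree at most $d_n$ with $\sum_{s}s<\alpha_n$. Two consequences follow.
\begin{itemize}
\item Once $\alpha_n$ exceeds the integer stoichiometric cost of your padded network $\phi\rightleftharpoons X_1$, that network with rates $\lambda p,\lambda q$ computes every positive rational $p/q$. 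So $M_{\alpha_n}=\infty$, and your cardinality fallback fails too.
\item Whenever $\alpha_n<\alpha'_n$ lie in the same interval $(m,m+1]$ with $m\in\mathbb{N}$, the two classes coincide, so no $r$ exists at all. In particular, your ``small rational increment'' only yields values already in $\mathbb{R}_{ns}^{\alpha_n,d_n}$.
\end{itemize}
Restricting $\mathbb{K}$ to a discrete set changes the statement. Even then, Step~1 fails for $n\ge 2$ because initial concentrations are not budgeted. The closed network $X\rightleftharpoons Y$ with $k_1=k_2=1$ has $x_\infty=(x_0+y_0)/2$, so it outputs every positive rational at a fixed cost. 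Its equilibria are non-isolated, so no Cauchy-type root bound applies. Yet they are stable in exactly the neutral sense that your inert or catalytic padding in Step~2 requires. The same example shows that the constant in Lemma~\ref{lem:bound} and Corollary~\ref{lem:bound11} cannot be uniform in the initial data.

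Step~3 has a second, independent gap. The claim that raising a source rate pushes the stable equilibrium to the right is a one-variable fact; Proposition~\ref{prop:stability_invariance} is about a scalar polynomial. For $n\ge 2$ the equilibrium moves by $-J^{-1}e_j\,\delta$, with $J$ the Jacobian, and the output component of this shift has no fixed sign. For example, take $\phi\rightleftharpoons X$ (production $a$, degradation $1$), $\phi\rightleftharpoons Y$ (rates $1,1$), and $X+Y\rightleftharpoons X$ (rate $2$ for $X+Y\to X$, rate $1$ for $X\to X+Y$). The stable equilibrium has $y^*=(1+a)/(1+2a)$, which decreases in $a$. Moreover, a maximizing network need not exist, and if it does it need not contain a source reaction.

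For comparison, the paper's proof has the same skeleton. It bounds the $\alpha_n$-class by $\max\mathcal{C}_{\alpha_n}$, the largest constant $C_1$ of \eqref{eq:c1b}. It then picks a rational $q$ with $\max\mathcal{C}_{\alpha_n}<q<\max\mathcal{C}_{\alpha'_n}$, computable by Lemma~\ref{lem:rational} and Proposition~\ref{prop:nested_inclusion_chain}. But it only asserts that this maximum exists and bounds the class, and that $q$ fits within budget $\alpha'_n$. So it does not supply the uniform bound you were missing. By the scaling observation, the statement itself must be reformulated before any argument of this kind can go through. One option is to require integer rate constants, budgeted initial data, and asymptotic stability.
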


\begin{proof}
    According to Lemma \ref{lem:bound}, we understand that with the general form of a reversible reaction network (equation \ref{eq:rev_rate_general_nspecies}), computing real numbers, all species concentrations are bounded as long as there is no uncontrolled mass accumulation. In particular, $\exists C_1\in \mathbb{R}$ such that $\forall t\in[0,\infty),||x(t)||_\infty<C_1$. From the proof of Lemma \ref{lem:bound}, we know that the value $C_1$ depends on the values present in the set of stoichiometric coefficients $\mathbb{S}$ and the rate constants $\mathbb{K}$.
    
    Let $\mathcal{C}_{\alpha_n}$ be the set of all possible bounds $C_1$ induced by the constraint $\alpha_n$. It can be easily verified from \eqref{eq:c1b} that for $\alpha'_n>\alpha_n$, $\max \mathcal{C}_{\alpha'_n}>\max\mathcal{C}_{\alpha_n}$. From Lemma \ref{lem:rational}, we know that the set of rational numbers, $\mathbb{Q}$, can be computed by a single-species RevCRN. Furthermore, from Proposition \ref{prop:nested_inclusion_chain}, we can conclude that $\mathbb{Q}$ can be computed by an $n$-species RevCRN for any $n \in \mathbb{N}$ with $n > 0$. 

    Given that $\mathbb{Q}$ is dense, there exists a rational number $q \in \mathbb{Q}$ such that $\max \mathcal{C}_{\alpha'_n} > q > \max \mathcal{C}_{\alpha_n}$. This implies that $q \notin \mathbb{R}_{ns}^{\alpha_n,d_n}$, but $q \in \mathbb{R}_{ns}^{\alpha'_n,d_n}$.
\end{proof}


We evaluate the increase in computational capability of RevCRNs strictly due to an increase in the value of the degree parameter while keeping the constraint parameter $\alpha_n \in \mathbb{Q}$ and the number of species $n\in\mathbb{N}_{>0}$ fixed. To do this, we introduce a higher degree bound $d'_n \in \mathbb{N}$ with respect to the baseline degree $d_n \in \mathbb{N}$ satisfying $d'_n > d_n$. Under these parameters, we define $\mathbb{R}_{ns}^{\alpha_n,d_n}$ as the set of all real numbers computable by an $n$-species RevCRN subject to the constraints $(\alpha_n, d_n)$. Similarly, let $\mathbb{R}_{ns}^{\alpha_n,d'_n}$ denote the set of all real numbers computable by an $n$-species RevCRN under the constraints $(\alpha_n, d'_n)$. Then we can define the following proposition.

\begin{proposition}\label{prop:degree_expansion_separation}
There exists an $r \in \mathbb{R}_{RevCRN}$ such that
\begin{equation}
    r \in \mathbb{R}_{ns}^{\alpha_n,d'_n} \setminus \mathbb{R}_{ns}^{\alpha_n,d_n}
\end{equation}
\end{proposition}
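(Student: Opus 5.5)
My plan is to exhibit a specific algebraic number that needs degree $d_n+1$ to be computed. The number should be realisable by a one-species RevCRN whose rate polynomial has degree $d'_n$. I then transport it to $n$ species via Proposition~\ref{prop:nested_inclusion_chain}. This mirrors the structure of Proposition~\ref{prop:constraint_expansion_separation}, with the degree playing the role the budget $\alpha$ played there.

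\textbf{Choice of witness.} Take $r=\sqrt[m]{p/q}$ with $m=d'_n$ (or $m=d_n+1\le d'_n$) and $p/q$ chosen so that $x^m-p/q$ is irreducible over $\mathbb{Q}$. For example, take $p/q$ a prime $\ell$, so that by Eisenstein $x^m-\ell$ is irreducible. Then $r$ is algebraic of degree exactly $m>d_n$.

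\textbf{Membership in $\mathbb{R}_{ns}^{\alpha_n,d'_n}$.} Generalise the second construction of Lemma~\ref{lem:rational} to $\phi\rightleftharpoons mX$, with rate constants chosen so that $dx/dt=\ell-x^m$ (up to scaling). This has the unique positive stable root $r$, and its degree is $m\le d'_n$. Adding $n-1$ inert catalytic species as in Proposition~\ref{prop:nested_inclusion_chain} gives an $n$-species network. The cost to the budget is $m+1+\ell$ plus the catalyst coefficients, so one must assume $\alpha_n$ is large enough to afford this. I would state this as the standing hypothesis, namely that $\alpha_n$ is not the binding constraint. Otherwise the proposition can fail trivially, since a small $\alpha_n$ already forces a degree below $d_n$.

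\textbf{Non-membership in $\mathbb{R}_{ns}^{\alpha_n,d_n}$.} This is the main obstacle. For positive stable equilibria, Proposition~\ref{prop:pos_steady_state_ALG} puts the components in $ALG$. I would then argue via elimination (a resultant or Gröbner basis on the reduced, isolated fixed-point system) that each coordinate is a root of a univariate rational polynomial. By Bézout, the degree of that polynomial is at most $d_n^{\,n}$, and not $d_n$. So the clean separation really needs $m>d_n^{\,n}$, which I would choose (for instance $m=d'_n$, assuming $d'_n>d_n^{\,n}$), or else restrict to $n=1$, where the bound is exactly $d_n$.

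A second difficulty is the non-positive equilibria, where transcendental values appear (Proposition~\ref{prop:ALG_RevCRN_nequiv}). There I cannot bound the degree at all. Instead I would use finiteness: a bounded $\alpha_n$ with rational rate constants in a fixed budget allows only finitely many networks if the rates are restricted to a discrete grid. Each network has finitely many limits per initial condition, but initial conditions vary. I expect this to be the genuine gap, and the honest fallback is to restrict $r$ to positive isolated equilibria, arguing as in Proposition~\ref{prop:pos_steady_state_ALG}.
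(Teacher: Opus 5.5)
\textbf{Verdict.} Your proposal proves the case $n=1$, but not the proposition as stated: for $n\ge 2$ the non-membership step has a genuine gap.

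\textbf{The case $n=1$.} This part is sound. A limit of $\dot x=P(x)$ with $P\in\mathbb{Q}[x]$ and $\deg P\le d_n$ is either a root of $P$ or, if $P\equiv 0$, the rational initial value. So it has algebraic degree at most $d_n$. Meanwhile $\phi\rightleftharpoons mX$ with $m=d_n+1\le d'_n$ yields $\ell^{1/m}$, which has degree exactly $m$ by Eisenstein.

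Your remark that the statement is false as written when $\alpha_n$ already caps the degree at or below $d_n$ is also correct: no admissible network can then exploit the larger degree bound. The paper's proof does not address this. It is a one-sentence appeal to ``standard properties'' of degree-$(d+1)$ versus degree-$d$ polynomials, i.e.\ the same intuition as yours.

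Two repairs are needed:
\begin{itemize}
\item The rate-constant part of the budget is essentially free. Multiplying all rate constants by a common positive rational only rescales time, so your hypothesis on $\alpha_n$ need only concern the stoichiometric part.
\item Do not add the extra species as catalysts ``as in Proposition~\ref{prop:nested_inclusion_chain}''. A catalyst attached to every reaction multiplies every monomial by its concentration, raising the degree by one per species. Attach them through decoupled reactions such as $\phi\rightleftharpoons X_j$ instead, which keeps the maximal degree at $m$.
\end{itemize}

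\textbf{The gap for $n\ge 2$.} First, the loss you attribute to B\'ezout is real, not an artifact of your method. Take two species and degree $2$: the network $Y\rightleftharpoons 2X$ (both rates $1$), together with $\phi\to X+Y$ at rate $\ell$ and $X+Y\to\phi$ at rate $1$. Its equilibrium equations are $y=x^2$ and $xy=\ell$, so $x^*=\ell^{1/3}$. The network is reversible, has deficiency zero and full-rank stoichiometry. Hence this is the unique positive equilibrium, it is asymptotically stable, and it attracts the trajectory from the origin. So your witness already lies in the smaller class when $n=2$, $d_n=2$, $d'_n=3$. Your route therefore needs $d'_n>d_n^{\,n}$, which the proposition does not assume.

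Second, and more seriously, $\mathbb{R}_{ns}^{\alpha_n,d_n}$ also contains limits that are not isolated zeros of the vector field in their compatibility class. An example is a limit along a continuum of boundary equilibria, as in Proposition~\ref{prop:ALG_RevCRN_nequiv}. There the output is a polynomial flow evaluated at a ``time'' fixed by the rational initial data, and elimination on the fixed-point equations says nothing about such limits.

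Your fallback of restricting $r$ to positive isolated equilibria misses the point. Non-membership quantifies over every $n$-species degree-$d_n$ network and every rational initial vector. How $r$ is produced in the larger class is irrelevant; you must rule out $r$ arising by any mechanism in the smaller class.

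The grid-based finiteness idea is also unavailable. Rate constants range over $\mathbb{Q}_{>0}$ and initial vectors over $\mathbb{Q}^n_{\ge 0}$, not over a discrete grid.

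So for $n\ge 2$ you have only a conditional sketch. It needs both $d'_n>d_n^{\,n}$ and a missing argument that controls non-isolated limits, and the paper's proof does not supply that argument either.
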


\begin{proof}
    According to the standard mathematical properties of single and multivariate polynomials, it is established that for a fixed number of variables and bounded rational coefficients, a polynomial of degree \(d + 1\) can compute a larger set of algebraic numbers than one of degree \(d\).
\end{proof}

\section{Discussion}

The behavior of Shannon's General Purpose Analog Computers (GPACs) is governed by ODEs, similar to the dynamics of CRNs. Notably, it has been established that CRNs represent a specific subclass of GPACs, as documented in the literature \cite{Bournez2021_survey_analog, huang_phd_thesis, GPAC_RTCRN_equivalence}. Prior investigations have demonstrated that polynomial ODEs, alongside GPACs, possess the capability to simulate any arbitrary Turing machine \cite{BOURNEZ2016106_compute_pode, GRACA_PIVP}. In a recent study in 2019, Huang, Klinge, and Lathrop 
show that the class of real numbers computable in real time by both CRNs and GPACs is equivalent  \cite{GPAC_RTCRN_equivalence}. In this context, the proposed work highlights the existence of reversible computing in the broad field of analog computation.

This work formally defines a computable version of the Reversible Chemical Reaction Network (RevCRN) and explores its ability to compute real numbers. The results presented in this paper expand the understanding of RevCRN-computable real numbers $\mathbb{R}_{RevCRN}$. It further explores the overlap between $\mathbb{R}_{RevCRN}$ and other CRN-computable reals, such as $\mathbb{Q}, ALG, \mathbb{R}_{LCRN}$ and $\mathbb{R}_{RTCRN}$, previously introduced in the literature.

The results of this work show that 1-species RevCRNs can compute any rational and algebraic number. In particular, $\mathbb{Q}\subsetneqq ALG=\mathbb{R}_{RevCRN}^{1s}$. This inclusion further confirms that the 1-species RevCRN can compute any numbers in $\mathbb{R}_{LCRN}$ as $ALG=\mathbb{R}_{LCRN}$ while guaranteeing the overlap with $\mathbb{R}_{RTCRN}$, i.e., $\mathbb{R}_{RevCRN}\cap\mathbb{R}_{RTCRN}\neq \phi$. Upon further investigation, our results confirm that $ALG$ and $\mathbb{R}_{LCRN}$ are strict subsets of $\mathbb{R}_{RevCRN}$. Our finding supports the fact that $\mathbb{R}_{RevCRN}$ can also compute transcendental numbers. Furthermore, it was also found that the set of real numbers computable by detailed-balanced RevCRNs is a subset of $ALG$, i.e., $\mathbb{R}_{RevCRN}^{DetBal} \subseteq ALG$. However, in this work, we did not investigate the exact inclusion between $\mathbb{R}_{RevCRN}$ and $\mathbb{R}_{RTCRN}$, leaving it as an open problem for future work. This can be stated precisely as follows.

\begin{openproblem*}
We know that $\mathbb{R}_{RevCRN}\cap\mathbb{R}_{RTCRN}\neq \phi$. But it is open to investigate whether 
\begin{enumerate}
    \item $\mathbb{R}_{RevCRN}\subsetneqq\mathbb{R}_{RTCRN}$
    \item $\mathbb{R}_{RTCRN}\subsetneqq\mathbb{R}_{RevCRN}$
    \item $\mathbb{R}_{RevCRN}=\mathbb{R}_{RTCRN}$
    \item None of the above is true but $\mathbb{R}_{RevCRN}$ and $\mathbb{R}_{RTCRN}$ have a significant overlap
\end{enumerate}
\end{openproblem*}

In section \ref{sec:hierarchy}, we demonstrated the existence of a 3-dimensional hierarchy of computable reals within $\mathbb{R}_{RevCRN}$. The hierarchy indicates that the size of a set of real numbers increases with an increase in one of three variables: the parameter bound \(\alpha\), the number of species \(|S|\), or the degree of the polynomial \(d\), while fixing the other two variables as a constant. The results provided are a consequence of investigating a more general question of whether $\mathbb{R}_{RevCRN}^{ns}\subset\mathbb{R}_{RevCRN}^{(n+1)s}$ or whether $\exists k\in\mathbb{N}$ such that the hierarchy collapses at the $k^{th}$ level. In light of the preceding analysis, we conclude by proposing the following conjecture and presenting it as another open problem for future exploration.

\begin{conjecture*}
For any $n\in\mathbb{N}$, we have $\mathbb{R}_{RevCRN}^{ns}\subset\mathbb{R}_{RevCRN}^{(n+1)s}$ where, $\mathbb{R}_{RevCRN}^{ns}$ and $\mathbb{R}_{RevCRN}^{(n+1)s}$ are the set of RevCRN-computable reals using $n$ and $n+1$ species respectively.
\end{conjecture*}

In conclusion, future research can explore a relaxed notion of reversibility, known as weak reversibility, and examine its computational capabilities. In weakly reversible systems, although there is no guarantee of energy optimization, we can still achieve CRN reusability and revert the system to any state, including the initial state, via specific reaction pathways. This approach also simplifies species extraction, similar to fully reversible networks. By relaxing the strict reversibility condition, we expect to expand the set of computable numbers for weakly reversible chemical reaction networks ($\mathbb{R}_{WRevCRN}$). 

We know from the standard definitions of fully reversible and weakly reversible reaction networks that all fully reversible reaction networks are also weakly reversible. However, the reverse is not true \cite{feinberg2019foundations}. Let $\mathbb{R}_{RevCRN}^{FR}$ represent the set of real numbers that can be computed by fully reversible reaction networks (RevCRNs), where every forward reaction is paired with a backward reaction bearing a positive rate constant. By definition, this implies that $\mathbb{R}_{RevCRN}^{FR} \subseteq \mathbb{R}_{WRevCRN}$. However, further results on inclusion and the exact structure of $\mathbb{R}_{WRevCRN}$ remain an open problem.

\bibliography{reference}

\appendix

\section*{Appendix}

\setcounter{equation}{0}
\renewcommand{\theequation}{A.\arabic{equation}}
\setcounter{note}{0}
\setcounter{remark}{0}

\section{General form of rate equations of Reversible CRNs}
\label{sec:general_eq_123_sp}
\subsection{1 Specie Reaction}

Consider the following general equation for a 1-species reaction network.

\begin{equation}
    R_i:\qquad a_iX\xrightleftharpoons[k_{i2}]{k_{i1}}b_iX,\qquad i\in \mathbb{N}, 1\leq i\leq |R|
    \label{eq:1s_rev_reac}
\end{equation}

Here $R$ is the set of reactions in the reaction network and $|R|$ represents the cardinality of $R$. $a_i\in\mathbb{N}$ and $b_i\in\mathbb{N}$ are constants representing the stoichiometric coefficients, where $k_{i1}\in\mathbb{Q}$ and $k_{i2}\in\mathbb{Q}$ are the rate constants of the forward and backward reactions, respectively. These notations remain consistent for the rest of the document.

\begin{note}
    There can be reactions that involve 0-complex species, specifically when $\phi$ is one of the complexes. In these types of reactions, the reaction network is influenced by external factors. In reality, reversibility applies only to reactions that include both reactant and product species. However, the interaction of a single species with an external factor is not typically considered in the study of reversible reactions.
\end{note}

For example, an external factor, such as human intervention, can introduce a new species or additional mass of an existing species into the reacting system. In this case, it is not necessary to remove mass from the system; instead, one can allow the system to remain in its current state after the mass injection and let the reaction progress through the network. Therefore, external control does not need to maintain reversibility when adding or removing mass from the system. 

It is essential, however, that the rate constants for addition and removal be positive rational numbers to avoid the potential for computational advantages. Conversely, reactions that involve internal interactions with other species, which possess non-zero complexes on both the reactant and product sides, must adhere strictly to the condition that if there is a forward reaction, there must be a corresponding backward reaction. This means that if \( k_{i1} > 0 \), then it must also be true that \( k_{i2} > 0 \) and vice versa. These conditions will remain consistent throughout the rest of the document.

\subsection*{ODE Analysis}

The general rate equation of the chemical reactions provided in \ref{eq:1s_rev_reac} is given as follows

\begin{equation}
        \frac{dx}{dt}=\sum_{i=1}^{|R|}(b_i-a_i)\big(k_{i1}x^{a_i}-k_{i2}x^{c_i}\big)
        \label{eq:1sp_rate_equation}
\end{equation}

\subsection{2 Species Reaction}

Consider the following general equation for a 2-species reaction network.

\begin{equation}
    R_i:\qquad a_iX+b_iY\xrightleftharpoons[k_{i2}]{k_{i1}}c_iX+d_iY,\qquad i\in \mathbb{N}, 1\leq i\leq |R|
    \label{eq:2s_rev_reac}
\end{equation}

Here, $R$ is the same as in the last case. $a_i,b_i,c_i$ and $d_i$ are constants representing the stoichiometric coefficients where $k_{i1}$ and $k_{i2}$ are the rate constants of the forward and backward reactions, respectively. Similar to the case of 1-species reaction networks, we have a set of general reactions indexed by $i$.

\subsection*{ODE Analysis}

The general rate equation of the chemical reactions provided in \ref{eq:2s_rev_reac} is given as follows

\begin{equation}
    \begin{split}
        \frac{dx}{dt}&=\sum_{i=1}^{|R|}(c_i-a_i)\big(k_{i1}x^{a_i}y^{b_i}-k_{i2}x^{c_i}y^{d_i}\big)\\
        \frac{dy}{dt}&=\sum_{i=1}^{|R|}(d_i-b_i)\big(k_{i1}x^{a_i}y^{b_i}-k_{i2}x^{c_i}y^{d_i}\big)
    \end{split}
    \label{eq:2sp_general}
\end{equation}

Note that if $b_i=d_i, \frac{dy}{dt}=0\Rightarrow y(t)=constant$. As per construction $c_i>a_i$ and $d_i>b_i$.

Further, if $constant=1$, we recover the general equation for a 1-species reaction network. Thus, $\mathbb{R}_{RevCRN}^{1sp}\subseteq\mathbb{R}_{RevCRN}^{2s}$

\subsection{3 Species Reaction}
\label{sec:3s_revcrn}

Consider the following general equation for a 3-species reaction network.

\begin{equation}
    R_i:\qquad a_iX+b_iY+c_iZ\xrightleftharpoons[k_{i2}]{k_{i1}}d_iX+e_iY+f_iZ,\qquad i\in \mathbb{N}, 1\leq i\leq |R|
    \label{eq:3s_rev_reac}
\end{equation}

The fundamental constraint and notations follow from the last section.

\subsection*{ODE Analysis}

The general rate equation of the chemical reactions provided in \ref{eq:2s_rev_reac} is given as follows

\begin{equation*}
    \begin{split}
        \frac{dx}{dt}&=\sum_{i=1}^{|R|}(d_i-a_i)\big(k_{i1}x^{a_i}y^{b_i}z^{c_i}-k_{i2}x^{d_i}y^{e_i}z^{f_i}\big)\\
        \frac{dy}{dt}&=\sum_{i=1}^{|R|}(e_i-b_i)\big(k_{i1}x^{a_i}y^{b_i}z^{c_i}-k_{i2}x^{d_i}y^{e_i}z^{f_i}\big)\\
        \frac{dz}{dt}&=\sum_{i=1}^{|R|}(f_i-c_i)\big(k_{i1}x^{a_i}y^{b_i}z^{c_i}-k_{i2}x^{d_i}y^{e_i}z^{f_i}\big)
    \end{split}
\end{equation*}

Like in the last case, it can be easily shown that $\mathbb{R}_{RevCRN}^{2s}\subseteq\mathbb{R}_{RevCRN}^{3s}$. The details are thus skipped in this analysis.

\section{Proof of Lemma \ref{lem:bound}}
\label{proof:bound}

We will prove this result through a series of estimates and by contradiction. We demonstrate this using equation \eqref{eq:2sp_general}. The general case can be derived from this.

We assume WLOG that the polynomial in the RHS of \eqref{eq:2sp_general} has powers that are in ascending order. Let, $\forall i\in\mathbb{N}$,
\begin{equation*}
    \begin{split}
        m_1&=\min\limits_{1\leq i\leq |R|}{[(c_{i}-a_{i})k_{i2}]},\qquad M_1=\max\limits_{1\leq i\leq |R|}{[(c_{i}-a_{i})k_{i1}]},\\ m_2&=\min\limits_{1\leq i\leq |R|}{[(d_{i}-b_{i})k_{i2}]},\qquad  M_2=\max\limits_{1\leq i\leq |R|}{[(d_{i}-b_{i})k_{i1}]}
    \end{split}
\end{equation*}

We next split the summation in \eqref{eq:2sp_general} into two separate summations, and use properties of maxima and minima to yield,
\begin{equation}
    \begin{split}
        \frac{dx}{dt} + |R|m_1x^{c_{|R|}}y^{d_{|R|}}&\leq|R|M_1x^{a_{|R|}}y^{b_{|R|}}\\
        \frac{dy}{dt}+|R|m_2x^{c_{|R|}}y^{d_{|R|}}&\leq|R|M_2x^{a_{|R|}}y^{b_{|R|}}\\
    \end{split}
    \label{eq:2sp_general1}
\end{equation}

Now a time scaling of the form $\tau=|R|t$, yields,

\begin{equation}
    \begin{split}
        \frac{dx}{d\tau} + m_1y^{d_{|R|}}&\leq M_1x^{a_{|R|}}y^{b_{|R|}}\\
        \frac{dy}{d\tau}+ m_2y^{d_{|R|}}&\leq M_2x^{a_{|R|}}y^{b_{|R|}}\\
    \end{split}
    \label{eq:2sp_general12}
\end{equation}

Note, $x^{c_{|R|}}y^{d_{|R|}} = \left(\epsilon x^{a_{|R|}}y^{b_{|R|}}\right)^{\frac{c_{|R|}d_{|R|}}{a_{|R|}b_{|R|}}} \left(\frac{1}{\epsilon}\right)^{\frac{c_{|R|}d_{|R|}}{a_{|R|}b_{|R|}}}$

Next we use Young's inequality \cite{sell2002dynamics} with $\epsilon$ ($ab \leq \frac{a^{p}}{p}+\frac{b^{q}}{q}, \frac{1}{p} + \frac{1}{q} = 1$), with $p=\frac{c_{|R|}d_{|R|}}{a_{|R|}b_{|R|}}, q=\frac{\frac{c_{|R|}d_{|R|}}{a_{|R|}b_{|R|}}}{\frac{c_{|R|}d_{|R|}}{a_{|R|}b_{|R|}} - 1}$, on the RHS of \eqref{eq:2sp_general12}, and choose $\epsilon$ such that, 

\begin{equation}
\label{eq:ep1}
\frac{1}{2}\min{[(c_{i}-a_{i})k_{i2}]} = \max{[(c_{i}-a_{i})k_{i1}]} \frac{\epsilon^{\frac{c_{|R|}d_{|R|}}{a_{|R|}b_{|R|}}}}{\frac{c_{|R|}d_{|R|}}{a_{|R|}b_{|R|}}}
\end{equation}

to yield,

\begin{equation}
    \begin{split}
        \frac{dx}{d\tau} + \frac{1}{2}\min{[(c_{i}-a_{i})k_{i2}]}x^{c_{|R|}}y^{d_{|R|}}& \leq C_{1}\\
        \frac{dy}{d\tau}+ \frac{1}{2}\min{[(c_{i}-a_{i})k_{i2}]}x^{c_{|R|}}y^{d_{|R|}}&\leq C_{1}\\
    \end{split}
    \label{eq:2sp_general123}
\end{equation}
Here,
\begin{equation}
\label{eq:c1b}
    C_{1}= \left(\left(\frac{1}{\epsilon}\right)^{\frac{c_{|R|}d_{|R|}}{a_{|R|}b_{|R|}}}\right)^{\frac{\frac{c_{|R|}d_{|R|}}{a_{|R|}b_{|R|}}}{\frac{c_{|R|}d_{|R|}}{a_{|R|}b_{|R|}} - 1}}
\end{equation}
in which $\epsilon$ is the solution of \eqref{eq:ep1}.

We next use a contradiction argument. Assume there is no upper bound on $ x$ or $ y$. Then for any $K$, arbitrarily large, there exists a time $t_{2}$, such that for times $t > t_{2}$, 

\begin{equation}
        ||x||_{\infty} > K, \ ||y||_{\infty} > K.
    \end{equation}

Then using $||y||_{\infty} > K$ in the first equation in \eqref{eq:2sp_general123}, on the time interval $[t_{2},\infty)$ yields,

\begin{equation}
     \frac{dx}{d\tau} + \frac{1}{2}\min{[(c_{i}-a_{i})k_{i2}]}(K)^{d_{|R|}} x^{c_{|R|}}\leq C_{1}
    \label{eq:2sp_general1234}
\end{equation}

But now, by applying Gr\"onwall's lemma \cite{perko2013differential}, $x$ is bounded, invalidating the unbounded assumption on it, yielding a contradiction. For the three-species case and beyond, this approach yields the boundedness result.

\section{Analysis of ladder reaction for 4 species system}
\label{sec:4s_ladder_example}

    We can observe a similar result when we add one more reaction to the system, pushing it to a 4-species system of this form.
\begin{equation}
    \begin{split}
        W&\xrightleftharpoons[k_2]{k_1}\phi\\
        W+X&\xrightleftharpoons[k_4]{k_3}X+W+W\\
        W+Y&\xrightleftharpoons[k_6]{k_5}Y+W+W\\
        W+Z&\xrightleftharpoons[k_8]{k_7}Z+W+W
    \end{split}
    \label{reac:ladder_4sp}
\end{equation}
Here, $X$, $Y$, and $Z$ are any 3-species reaction constrained by $\alpha_3$ and independent of $W$. We then get the following ODE for W.
\begin{equation}
    \frac{dw}{dt}=k_2+(k_3x+k_5y+k_7z-k_1)w-(k_4x+k_6y+k_8z)w^2
\end{equation}

Similar to the last case, at the limit $t\rightarrow\infty$, we again get stable points $x_\infty\in X_\infty, y_\infty\in Y_\infty$, and $z_\infty\in Z_\infty$ respectively. With these fixed points, we get
\begin{equation}
\begin{split}
    w_\infty=\frac{-b\pm\sqrt{b^2-4ac}}{2a}\\
    Here,\ a=k_4x_\infty+k_6y_\infty+k_8z_\infty\\
    b=k_3x_\infty+k_5y_\infty+k_7z_\infty-k_1\\
    c=k_2
\end{split}
\end{equation}

Using a similar argument as in claim \ref{reac:ladder}, we can show that there are configurations of rate constants such that $w_\infty > x_\infty, y_\infty, z_\infty$. Further, using a similar proof as in Lemma \ref{lemma:base_case}, we can show that $w_\infty\notin \mathbb{R}_{3s}^{\alpha_e,d_e=\mathbb{N}}$ for any $d_3$ and $\alpha_3$ with $\alpha_3<\alpha_4$.

\end{document}